\documentclass[aps,prl,reprint,10pt,superscriptaddress,nofootinbib]{revtex4-1}

\usepackage{amsmath,amssymb,amsfonts}
\usepackage{bm}
\usepackage[colorlinks=true,linkcolor=blue,citecolor=blue,urlcolor=blue]{hyperref}
\newcommand{\doilink}[2]{\href{https://doi.org/#1}{#2}}
\newcommand{\arxivlink}[2]{\href{https://arxiv.org/abs/#1}{#2}}
\newcommand{\ket}[1]{|#1\rangle}
\newcommand{\bra}[1]{\langle #1|}
\newcommand{\ketbra}[1]{\ket{#1}\bra{#1}}
\newcommand{\Ker}{\operatorname{Ker}}
\newcommand{\Ran}{\operatorname{Ran}}
\newcommand{\rank}{\operatorname{rank}}
\newcommand{\supp}{\operatorname{supp}}
\newcommand{\Tr}{\operatorname{Tr}}
\newcommand{\Span}{\operatorname{span}}
\newtheorem{theorem}{Theorem}
\newtheorem{lemma}{Lemma}

\begin{document}

\title{Every Rank-2 Bipartite Entangled State is Projectively Steerable}

\author{Yu-Xuan Zhang}
\affiliation{School of Physics, Nankai University, Tianjin 300071, People's Republic of China}

\author{Jing-Ling Chen}
\email{chenjl@nankai.edu.cn}
\affiliation{Theoretical Physics Division, Chern Institute of Mathematics, Nankai University, Tianjin 300071, People's Republic of China}

\date{\today}

\begin{abstract}
We prove that every rank-2 bipartite entangled state is already steerable with projective measurements, closing the first possible gap between entanglement and Einstein-Podolsky-Rosen (EPR) steering. In arbitrary finite local dimensions, any rank-2 entangled state is projectively steerable in at least one direction; when the effective dimensions are equal, steering is two-way. The argument is geometric: a rank obstruction forces some projective outcome on the larger local system to reach the boundary of the trusted-state space. From this boundary contact, a neighboring measurement produces a linear displacement that cannot be reproduced by a local-hidden-state model, because positivity only permits a quadratic filling into a kernel direction. This same block witnesses partial-transpose entanglement. If the contact is degenerate, a Schur complement removes one product layer and reduces the problem to a lower-rank entangled residual; each such reduction lowers the rank, so the recursion terminates. More broadly, we derive a directional rank criterion: if an $m\otimes n$ entangled state $\rho$ satisfies $\rank\rho\le 1+\lfloor(m-1)/(n-1)\rfloor$, then it is projectively steerable from $A$ to $B$. Since counterexamples already occur at rank three, rank two is the unique mixed rank for which entanglement guarantees projective steering, and the proof provides an operational certificate based solely on the support and kernel of the density matrix.
\end{abstract}

\maketitle

\textit{Introduction.---}The EPR paper exposed the possibility of remotely assigning quantum states, and Schr\"odinger named this asymmetric phenomenon ``steering''~\cite{Einstein1935,Schrodinger1935}.  Bell nonlocality later separated classical locality from quantum correlations~\cite{Bell1964}, while the modern formulation of Wiseman, Jones, and Doherty made steering an operational resource: an untrusted party steers a trusted party precisely when the observed assemblage has no local-hidden-state (LHS) model~\cite{Wiseman2007,Jones2007}.  The hierarchy is strict: Werner-type constructions and one-way steering examples show that entanglement alone is not a steering certificate~\cite{Werner1989,Bowles2014,Uola2020,Cavalcanti2017}.  Operationally, steering also connects to semidefinite programs, steering monotones, and joint-measurability formulations~\cite{Skrzypczyk2014,Quintino2014,UolaMoroder2014,UolaBudroni2015}.  Pure entangled states, however, already steer under suitable projective measurements within their Schmidt supports.  The first place where a genuine mixed-state separation could appear is therefore rank two.  The sharp structural question is not whether all entangled states steer, but whether the hierarchy already splits at rank two or the entanglement--steering gap remains closed at this first mixed rank.

This work settles the question at the first genuinely mixed rank. Rank one is the pure-state boundary; rank two is the first essential mixed-state test of the hierarchy.  We prove that every rank-2 bipartite entangled state, in arbitrary finite local dimensions after removing null local supports, is already steerable with projective measurements in at least one direction.  If the two effective local dimensions are equal, the conclusion is two-way.  The corresponding two-qubit result was established previously~\cite{ZhangChenFundamental2025}.  Thus there is no exceptional rank-2 entangled family hidden behind a projective LHS model, while analytic results for Bell-diagonal states already exhibit entangled rank-three states without projective steering~\cite{Nguyen2019}.  The proof works directly in the full effective local dimensions: the support--kernel geometry of the density operator furnishes the steering certificate as a structural alternative to two-qubit reduction and steering-inequality optimization.

The local boundary criterion, proved in the Supplemental Material~\cite{SM} and developed more broadly in Ref.~\cite{ZhangChenBoundary2026}, has a simple physical content.  It is complementary to geometric pictures of steering based on steering ellipsoids and EPR-correlation bodies~\cite{Jevtic2014,Jevtic2015,Nguyen2019}.  When a trusted conditional state lies on a boundary face, positivity fills a kernel direction only to second order.  If a neighboring projective outcome produces a first-order support--kernel tangent block, no LHS model can reproduce the resulting linear-versus-quadratic scaling.  The same matrix element produces a negative partial-transpose minor, in line with the close relation between negativity and steering~\cite{Pusey2013}.  Thus one boundary contact converts a local coherence into both projective steering and partial-transpose negativity.

The global step is a rank obstruction.  A rank-$r$ state on $\mathbb C^m\otimes\mathbb C^n$ confines the outcome-dependent Bob contractions to an $m$-dimensional subspace of $M_{n\times r}(\mathbb C)$.  If this space is too large to avoid the rank-one determinantal variety, then some projective outcome on the $m$-dimensional party prepares a pure conditional state on the $n$-dimensional party.  Rank two is the first mixed case of this mechanism, but the Schur-complement branch is in fact recursive.  If a contact is locally degenerate, one product layer can be peeled off; the entangled residual has rank and effective untrusted dimension lowered by one, so the same rank-forcing inequality remains valid.  The same mechanism yields a broader directional rank hierarchy for steering: high local dimension on the measuring side forces many mixed ranks to be steerable, while the reverse orientation may leave no nontrivial mixed-rank window.

This formulation ties the EPR steering hierarchy to low-rank PPT separability, determinantal algebraic geometry, and support--kernel certificates.  Its practical significance is that the support and kernel of a low-rank state are often the most accessible and stable data: they can come directly from source constraints, state reconstruction, variational or tensor-network descriptions, dissipative dark-state manifolds, or a chosen bipartite cut of a larger system.  Such structural information is available before, and often more robust than, an optimized steering inequality or a full SDP search.  The theorem turns rank/support data from a reconstruction or model into a steering decision throughout rank two; beyond rank two, the same support--kernel scalar remains a directly checkable certificate.


We use the following terminology. For concreteness, we refer to the untrusted party as Alice and the trusted party as Bob. Projective measurements on Alice are taken to mean a complete set of rank-one projective measurements; their outcomes generate the unnormalized projective assemblage. A state is \emph{projectively steerable} from $A$ to $B$ if this assemblage has no local-hidden-state representation.

Throughout, $\ket{x}\bra{x}$ denotes the unnormalized rank-one operator associated with $x$. If $S$ is a subspace, $P_S$ denotes the orthogonal projection onto $S$; for a nonzero vector $x$, we write $P_x:=P_{\mathbb Cx}$. Vectors that label actual projective outcomes are taken to be unit representatives.

Let $\rho$ be a density operator on $\mathbb C^m\otimes\mathbb C^n$ with rank $r$, and consider the steering direction $A\to B$. We take $m$ and $n$ to be the effective local support dimensions of $\rho$. Thus no nonzero vector $\alpha\in\mathbb C^m$ satisfies $(\bra{\alpha}\otimes I)\psi=0$ for every $\psi\in\Ran\rho$.

If, for example, $\rho_A$ has a null vector, we restrict to $\supp\rho_A$, since that null local direction never appears as a projective outcome with nonzero probability. After this reduction, the contraction map used in the next section is injective, so the dimension of the family of conditional vectors is exactly the effective dimension of the untrusted side.

The main result is the following.
\begin{theorem}[Rank-Two Projective Steering]
\label{thm:ranktwo}
Let $\rho$ be a rank-2 entangled state with effective local dimensions $m$ and $n$. If $m\ge n$, then $\rho$ is projectively steerable from $A$ to $B$. If $n\ge m$, then $\rho$ is projectively steerable from $B$ to $A$. For $m=n$, both directions hold.
\end{theorem}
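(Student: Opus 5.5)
By the symmetry of the statement it suffices to treat the direction $A\to B$ with $m\ge n$; the case $n\ge m$ follows by exchanging the roles of the parties. For $m=n$ both hypotheses hold, which gives two-way steering. The plan has three steps: a rank obstruction producing a pure conditional state on Bob's side, the local boundary criterion of the Supplemental Material applied at that boundary contact, and a Schur-complement reduction for the degenerate case.

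\emph{Step 1: rank obstruction.} Write $\rho=\sum_{k=1}^2\ketbra{\psi_k}$ and, for $\alpha\in\mathbb C^m$, define the contraction
\[
X(\alpha)=\bigl[(\bra{\alpha}\otimes I)\psi_1,\ (\bra{\alpha}\otimes I)\psi_2\bigr]\in M_{n\times 2}(\mathbb C).
\]
The conditional state of Bob for outcome $\alpha$ is then proportional to $X(\alpha)X(\alpha)^\dagger$. By effectiveness of the supports, the antilinear map $\alpha\mapsto X(\alpha)$ is injective, so its image is an $m$-dimensional subspace of the $2n$-dimensional space $M_{n\times 2}$. The determinantal variety of matrices of rank at most one has projective dimension $n$ in $\mathbb P^{2n-1}$. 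An $(m-1)$-dimensional projective subspace meets it whenever $(m-1)+n\ge 2n-1$, that is, whenever $m\ge n$. Hence some nonzero $\alpha_0$ gives $\rank X(\alpha_0)\le 1$, and since $X(\alpha_0)\ne 0$ the rank is exactly one. Completing the unit vector $\alpha_0$ to an orthonormal basis defines a projective measurement whose outcome $\alpha_0$ prepares a pure conditional state $\ketbra{\phi_0}$ of Bob. This conditional state lies on the boundary of Bob's state space, with kernel $\phi_0^\perp$.

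\emph{Step 2: boundary criterion.} Perturb the outcome to $\alpha_\epsilon=\alpha_0+\epsilon\beta$ within a rotated basis. To first order, the conditional operator acquires the term $\epsilon\bigl(X(\alpha_0)X(\beta)^\dagger+\mathrm{h.c.}\bigr)$. Its support--kernel block is $P_{\phi_0}\,X(\alpha_0)X(\beta)^\dagger\,P_{\phi_0^\perp}$. If this block is nonzero for some $\beta$, the local boundary criterion (Supplemental Material) excludes an LHS model. The reason is that in any LHS decomposition the hidden states contributing near the pure state $\ketbra{\phi_0}$ are forced toward $\phi_0$, so positivity permits only a filling of order $\epsilon^2$ into the kernel, while quantum mechanics produces a coherence of order $\epsilon$. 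The same matrix element yields a negative $2\times2$ minor of $\rho^{T_B}$. This is consistent with the conclusion but is not needed beyond that.

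\emph{Step 3: degenerate case (the main obstacle).} The remaining case is that the tangent block vanishes for every $\beta$, so that $X(\beta)X(\alpha_0)^\dagger$ maps into $\mathbb C\phi_0$ for all $\beta$. I expect this to force a product layer: writing $X(\alpha_0)=\phi_0 c^\dagger$, the condition should imply that the vector $\psi_c:=c_1\psi_1+c_2\psi_2$ in $\Ran\rho$ has the product form $\alpha_0'\otimes\phi_0$. I would then take the Schur complement of $\rho$ with respect to this product direction, $\rho'=\rho-t\,\ketbra{\alpha_0'\otimes\phi_0}$, with $t$ maximal subject to $\rho'\ge0$. The result $\rho'$ is a rank-one state, and it must be entangled, since otherwise $\rho$ would be separable. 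Being pure and entangled, $\rho'$ lives on reduced effective dimensions $m'=m-1$ and $n'\le n$, or is at least Schmidt-entangled.

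The delicate point is to show that steerability of the residual lifts back to steerability of $\rho$. The plan is to apply the boundary argument to the residual's contact while showing that the removed layer $\ketbra{\alpha_0'\otimes\phi_0}$ contributes nothing to the relevant support--kernel block, and that the inequality $m'\ge n'$ is preserved. Since the rank decreases at each such reduction, the recursion terminates. It ends at a pure entangled state, for which projective steering via Schmidt-basis measurements is standard. The main effort lies in verifying that the product layer really appears, and that the lifted kernel direction still yields a linear displacement for $\rho$ itself.
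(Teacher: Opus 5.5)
\textbf{Verdict: same route as the paper, but the lifting step you leave open is a genuine gap, and the criterion you sketch for it is not enough.}

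Your route is the paper's. Step 1 is the rank-forced pure-contact lemma at $r=2$, Step 2 is the boundary-contact certificate, and Step 3 is the degenerate-contact Schur peel. The product layer you only \emph{expect} follows in one line. Since $X(\alpha_0)^\dagger=c\,\phi_0^\dagger$, degeneracy says $X(\beta)c=(\bra{\beta}\otimes I)\psi_c\in\mathbb C\phi_0$ for every $\beta$, i.e.\ $\psi_c\in\mathbb C^m\otimes\mathbb C\phi_0$. Your maximal-$t$ subtraction then reproduces the paper's rank-two decomposition $\rho=p\,P_{u\otimes\phi_0}+(1-p)P_{\zeta}$ with $\zeta$ pure entangled.

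The gap is the lift. Steerability of the residual $P_\zeta$ says nothing by itself about $\rho$: a steerable component of a mixture need not make the mixture steerable, as Werner states show. What must be transported is the active-contact certificate. For that it is not enough that the peeled layer drop out of the support--kernel block. It must also drop out at zeroth order, or the contact is no longer pure for $\rho$.

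Here is how the naive lift fails. Measure at a Schmidt vector $x_1$ of $\zeta=\sum_j s_j x_j\otimes y_j$. Bob receives the unnormalized state $(1-p)s_1^2P_{y_1}+p|\langle x_1|u\rangle|^2P_{\phi_0}$. This has rank two whenever $\langle x_1|u\rangle\neq0$ and $\phi_0\notin\mathbb C y_1$. That case does occur: take $\zeta\in\alpha_0^\perp\otimes\mathcal H_B$ and $u\propto\alpha_0+x_1$, where $\alpha_0$ is still a degenerate pure contact. At such an outcome there is no boundary contact, so the boundary-scaling obstruction cannot be invoked.

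The missing idea is to re-choose the contact orthogonal to the Alice factor $u$ of the peeled layer.
\begin{itemize}
\item Since $\zeta\notin\mathbb C u\otimes\mathcal H_B$, there is a unit $\delta\perp u$ with $b_0=(\bra{\delta}\otimes I)\zeta\neq0$.
\item Since $\zeta$ has Schmidt rank at least two, there is a unit $\epsilon\perp\delta$ with $b_1=(\bra{\epsilon}\otimes I)\zeta$ not collinear with $b_0$.
\item The product layer then contributes exactly zero to both $\bra{\delta}\rho\ket{\delta}=(1-p)\ket{b_0}\bra{b_0}$ and $(\bra{\delta}\otimes I)\rho(\ket{\epsilon}\otimes I)=(1-p)\ket{b_0}\bra{b_1}$.
\item So $\delta$ is a pure contact of $\rho$ itself, and its support--kernel block is nonzero.
\end{itemize}

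This is exactly the paper's rank-two degenerate branch in Sec.~S5, where $\delta\in\mathcal K_A$ with $\delta\perp u_\perp$. The general version is the map $Jx=x-\langle\gamma|x\rangle\alpha_0$, followed by re-orthogonalizing the tangent. With this in place, your remarks about preserving $m'\ge n'$ and iterating are unnecessary at rank two, since one peel already leaves a pure residual.
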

The proof passes through a broader low-rank statement that locates the boundary contact underlying the rank-two result.

For the proof, a \emph{pure contact} for the direction $A\to B$ is a projective outcome whose unnormalized Bob state is nonzero and rank one.

\textit{Rank-forced pure contacts.--} Choose a square-root decomposition $\rho=\sum_{j=1}^r \ket{\psi_j}\bra{\psi_j}$, where the $\psi_j$ are linearly independent but need not be normalized. For a covector $x\in(\mathbb C^m)^*$, set $V(x)=[(x\otimes I)\psi_1,\ldots,(x\otimes I)\psi_r]$, an $n\times r$ matrix; for a unit representative of a projective outcome $\alpha$ we use $x=\bra{\alpha}$. Then the unnormalized Bob state is $\sigma_\alpha=\Tr_A[(P_{\alpha}\otimes I)\rho]=V(\bra{\alpha})V(\bra{\alpha})^\dagger$. Hence $\sigma_\alpha$ is a nonzero pure conditional state exactly when $V(\bra{\alpha})$ has rank one. Equivalently, for some line $\mathbb C\phi\subset\mathbb C^n$,
\begin{equation}
    \ket{\alpha}\otimes \phi^\perp\subseteq\Ker\rho.
    \label{eq:tensorzero}
\end{equation}
Here the inclusion means that $\ket{\alpha}\otimes\ket{\beta}\in\Ker\rho$ for every $\beta\perp\phi$. The covector map $x\mapsto V(x)$ is linear; the passage from a ket $\alpha$ to $\bra{\alpha}$ is only the usual antilinear Riesz identification. The vectors orthogonal to $\phi$ are precisely the zero directions of $\sigma_\alpha$, and positivity converts zero expectation values into kernel vectors.

On the effective Alice support this covector map is injective. Its image is therefore an $m$-dimensional linear subspace $L\subset M_{n\times r}(\mathbb C)$. In projective space $\mathbb P(M_{n\times r})$, the nonzero rank-one matrices form the Segre variety $\Sigma_1=\mathbb P^{n-1}\times\mathbb P^{r-1}$ of dimension $n+r-2$. The projectivization $\mathbb P L$ has dimension $m-1$, while the ambient projective dimension is $nr-1$. Over $\mathbb C$, the projective dimension theorem for projective varieties~\cite{Harris1992} implies that avoiding $\Sigma_1$ would require $(m-1)+(n+r-2)<nr-1$, equivalently $m\le (n-1)(r-1)$. Therefore the strict reverse inequality, $m>(n-1)(r-1)$, forces a nonzero rank-one matrix in $L$. Equivalently,
\begin{equation}
    r\le 1+\left\lfloor \frac{m-1}{n-1}\right\rfloor,
    \label{eq:rankbound}
\end{equation}
forces the tensor-zero slice in Eq.~\eqref{eq:tensorzero} for the direction $A\to B$. Exchanging $m$ and $n$ gives the corresponding condition for $B\to A$.

Equivalently, we have proved the following auxiliary lemma.
\begin{lemma}[Rank-forced pure contact]
\label{lem:rankforced}
Let $\rho$ have effective local dimensions $m,n$ and rank $r$. If
\begin{equation}
    r\le 1+\left\lfloor \frac{m-1}{n-1}\right\rfloor,
\end{equation}
then $\rho$ has a pure contact for $A\to B$. Equivalently, for some projective outcome $\alpha$ and Bob line $\mathbb C\phi$, one has $\ket{\alpha}\otimes\phi^\perp\subseteq\Ker\rho$.
\end{lemma}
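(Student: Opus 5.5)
The plan is to make the argument sketched just before the statement rigorous in three steps.

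\emph{Step 1: a linear space of Bob contractions.} Fix the square-root decomposition $\rho=\sum_{j=1}^r\ket{\psi_j}\bra{\psi_j}$ with linearly independent $\psi_j$, and use the linear map $x\mapsto V(x)$ from $(\mathbb C^m)^*$ to $M_{n\times r}(\mathbb C)$. Suppose $V(x)=0$. Then $(x\otimes I)\psi_j=0$ for every $j$, hence for every $\psi\in\Ran\rho$. By the definition of effective local dimension this forces $x=0$, so the map is injective and its image $L$ has dimension exactly $m$.

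\emph{Step 2: intersection with the Segre variety.} Write $\Sigma_1\subset\mathbb P^{nr-1}$ for the projectivized rank-$\le1$ matrices, which is irreducible of dimension $n+r-2$. The projective dimension theorem says that two projective varieties of dimensions $a,b$ in $\mathbb P^N$ with $a+b\ge N$ intersect. Apply it to $a=m-1$, $b=n+r-2$, $N=nr-1$. The condition $a+b\ge N$ reads $m\ge (n-1)(r-1)+1$, i.e.\ $m-1\ge (n-1)(r-1)$. For $n\ge2$ this is equivalent to $r-1\le\lfloor (m-1)/(n-1)\rfloor$, which is exactly the hypothesis. The degenerate case $n=1$ needs separate treatment: every nonzero $V(x)$ is then a row vector and so already has rank one. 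Any unit $\alpha$ with $\sigma_\alpha\neq0$ therefore works, and such an $\alpha$ exists by injectivity. The dimension theorem yields a nonzero matrix $V(x_0)\in L$ of rank at most one. Since $x_0\neq0$ by injectivity and $V(x_0)\neq 0$, the rank is exactly one.

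\emph{Step 3: translate back to a projective outcome and a kernel inclusion.} Normalize $x_0$ and write it as $\bra{\alpha}$ with $\alpha$ a unit vector (Riesz identification). Then $\sigma_\alpha=V(\bra\alpha)V(\bra\alpha)^\dagger$ is nonzero with rank one, say proportional to $\ket\phi\bra\phi$, so $\alpha$ is a pure contact. For the equivalent form, take any $\beta\perp\phi$. Then
\begin{equation}
\bra{\alpha\otimes\beta}\rho\ket{\alpha\otimes\beta}=\bra\beta\sigma_\alpha\ket\beta=0,
\end{equation}
and positivity of $\rho$ gives $\rho^{1/2}(\alpha\otimes\beta)=0$, hence $\alpha\otimes\beta\in\Ker\rho$. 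Conversely, the inclusion forces $\sigma_\alpha$ to vanish on $\phi^\perp$, so $\rank\sigma_\alpha\le1$. Here one still needs $\sigma_\alpha\neq0$, which again follows from injectivity.

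The main obstacle is Step 2. The intersection must be taken in the correct ambient space: $\mathbb P L$ is a linear subspace and $\Sigma_1$ is irreducible of the stated dimension. The intersection point must also be shown to be a genuine rank-one matrix rather than zero, and this holds because points of $\mathbb P L$ are nonzero classes. Finally, the floor inequality and the $n=1$ edge case have to be checked separately.
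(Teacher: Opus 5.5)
Your proof is correct and follows the paper's argument essentially step for step. You show the covector map $x\mapsto V(x)$ is injective on the effective Alice support, apply the projective dimension theorem to $\mathbb P L$ and the Segre variety $\Sigma_1$, and use positivity to turn zero expectation values into kernel vectors, with the converse relying on $\sigma_\alpha\neq0$. Your separate handling of $n=1$ is a harmless addition; the paper leaves that case implicit by assuming $n\ge2$.
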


This count is the rank-avoidance form of the determinantal bound, closely related to the classical Flanders--Atkinson theory of spaces of matrices with restricted rank~\cite{Flanders1962,AtkinsonLloyd1980}. A linear subspace of $M_{n\times r}$ with no nonzero rank-one matrix has dimension at most $(n-1)(r-1)$. The Supplemental Material~\cite{SM} gives an explicit rank-one-free construction showing that this threshold is sharp as a universal pure-contact guarantee on each nonempty full-effective exact-rank layer. In the present setting, avoiding rank-one matrices would mean that every nonzero projective outcome on the untrusted side prepares a Bob state of rank at least two. Once the untrusted effective dimension exceeds this avoidance bound, some outcome must prepare a pure Bob state.

The asymmetry of Eq.~\eqref{eq:rankbound} is explicit. For a fixed steering direction $A\to B$, the untrusted dimension $m$ helps, whereas the trusted dimension $n$ makes rank-one conditional states harder to force. Thus, in $d\otimes2$, steering from the $d$-dimensional side to the qubit side has the large window $r\le d$, whereas the reverse direction has no nontrivial mixed-rank window when $d>2$. For equal dimensions $m=n=d$, the forced window begins with $r=2$ and, for $d>2$, does not include $r=3$. This asymmetry reflects the directional nature of steering.

\textit{Steering consequence.--} Assume Eq.~\eqref{eq:rankbound}, and choose unit representatives $\alpha_0$ and $\phi$ such that $A:=\bra{\alpha_0}\rho\ket{\alpha_0}=aP_{\phi}$ with $a>0$. For a unit tangent direction $\alpha_1\perp\alpha_0$, write the associated two-dimensional Alice block as $\rho_{01}=\left(\begin{smallmatrix}A&B\\ B^\dagger&D\end{smallmatrix}\right)$, where
$B=(\bra{\alpha_0}\otimes I)\rho(\ket{\alpha_1}\otimes I)$ and $D=\bra{\alpha_1}\rho\ket{\alpha_1}$. The projective vector $\ket{\xi_t}=(\ket{\alpha_0}+t\ket{\alpha_1})/\sqrt{1+t^2}$ prepares $\sigma_t=[A+t(B+B^\dagger)+t^2D]/(1+t^2)$. Let $\beta\in\Ker A=\phi^\perp$ be unit. Positivity gives $B^\dagger\beta=0$, because $\beta$ is a zero direction of the positive corner $A$. Hence, in the Bob slice $\Span\{\phi,\beta\}$, the $\phi$--$\beta$ entry of $\sigma_t$ is $t\langle\phi|B|\beta\rangle+O(t^2)$, whereas the kernel population is $O(t^2)$. Thus a nonzero matrix element $\langle\phi|B|\beta\rangle$ produces first-order tangential motion while the inward filling remains second order.

We call the contact \emph{active} if, along some tangent direction, the first-order block connects the support of $A$ to its kernel, equivalently if
\begin{equation}
    P_{\supp A}BP_{\Ker A}\ne0.
\end{equation}
Otherwise it is \emph{degenerate}. This is exactly the support--kernel condition used in the boundary criterion.

The following boundary-contact certificate is the support--kernel form of the boundary theorem established in Ref.~\cite{ZhangChenBoundary2026}; a detailed proof in the notation used here is provided in the Supplemental Material~\cite{SM}. An active contact produces an $O(t)$ support--kernel displacement, while positivity allows only $O(t^2)$ population in the kernel direction; the shrinking-cap argument turns this mismatch into an obstruction to any LHS model. Here $\Gamma_B$ denotes partial transpose on Bob.
\begin{lemma}[Boundary-contact certificate]
\label{lem:boundarycertificate}
An active pure contact for $A\to B$ implies
\begin{equation}
    \rho^{\Gamma_B}\not\ge0,
    \qquad
    \rho\text{ is projectively steerable from }A\text{ to }B.
    \label{eq:mechanism}
\end{equation}
\end{lemma}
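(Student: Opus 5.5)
The proof has two independent parts: an NPT witness and a steering obstruction, both built on the same active pure contact. Fix $\alpha_0$, $\phi$ with $A=\bra{\alpha_0}\rho\ket{\alpha_0}=aP_\phi$, $a>0$. Activity supplies a unit $\alpha_1\perp\alpha_0$ and a unit $\beta\in\phi^\perp$ with $b:=\langle\phi|B|\beta\rangle\neq0$.

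\emph{NPT part.} I will compress $\rho$ to the four-dimensional subspace $\Span\{\alpha_0,\alpha_1\}\otimes\Span\{\phi,\beta\}$ and take the partial transpose on Bob. The relevant entries of $\rho$ are:
\begin{itemize}
\item the diagonal entry at $\ket{\alpha_0\beta}$ is $\langle\beta|A|\beta\rangle=0$;
\item the coherence between $\ket{\alpha_0\phi}$ and $\ket{\alpha_1\beta}$ is $b$.
\end{itemize}
Under $\Gamma_B$ this coherence moves to the $(\ket{\alpha_0\beta},\ket{\alpha_1\phi})$ position. The principal $2\times2$ minor of $\rho^{\Gamma_B}$ on $\{\ket{\alpha_0\beta},\ket{\alpha_1\phi}\}$ is therefore
\[
\begin{pmatrix}0&b\\ \bar b&\langle\phi|D|\phi\rangle\end{pmatrix},
\]
whose determinant is $-|b|^2<0$. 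Hence $\rho^{\Gamma_B}\not\ge0$. This part is routine once the index bookkeeping is fixed.

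\emph{Steering part.} I argue by contradiction. Suppose there is an LHS model
\[
\sigma_\xi=\int p(\xi|\lambda)\,\tau_\lambda\,d\mu(\lambda)
\]
for every rank-one projective outcome $\xi$. I would test it only on outcomes in the Alice plane $\Span\{\alpha_0,\alpha_1\}$, using the family $\xi_t$ for small real and imaginary $t$.

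\begin{enumerate}
\item \emph{Localization at $t=0$.} Since $\sigma_{\alpha_0}=aP_\phi$ is pure, every hidden state with $p(\alpha_0|\lambda)>0$ equals $P_\phi$ almost everywhere. Thus the contribution at $t=0$ is $\mu_0P_\phi$ with $\mu_0=a$.
\item \emph{Split at small $t$.} Write $\sigma_{\xi_t}=X_t+Y_t$, where:
\begin{itemize}
\item $X_t$ comes from hidden states $\lambda$ with $\tau_\lambda=P_\phi$;
\item $Y_t$ comes from all other $\lambda$.
\end{itemize}
The part $X_t$ is proportional to $P_\phi$, so it has no $\phi$--$\beta$ coherence.
\item \emph{Coherence versus population.} The coherence $\langle\phi|Y_t|\beta\rangle$ must equal $tb+O(t^2)$. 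For any positive operator, Cauchy--Schwarz gives
\[
|\langle\phi|Y_t|\beta\rangle|^2\le\langle\phi|Y_t|\phi\rangle\,\langle\beta|Y_t|\beta\rangle .
\]
The kernel population satisfies $\langle\beta|Y_t|\beta\rangle\le\langle\beta|\sigma_t|\beta\rangle=O(t^2)$. So the inequality closes only if $\langle\phi|Y_t|\phi\rangle$ stays bounded below by a constant of order $|b|^2/\langle\beta|D|\beta\rangle$.
\item \emph{Shrinking caps.} I then show that the non-$P_\phi$ mass in $\sigma_{\xi_t}$ must vanish as $t\to0$. The response functions are measurable with $\sum_\xi p(\xi|\lambda)=1$ over each basis. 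Consider the basis $\{\xi_t,\xi_t^\perp,\ldots\}$. The hidden states not equal to $P_\phi$ carry total weight $\int(1-[\tau_\lambda=P_\phi])\,d\mu$, and their response to $\xi_t$ converges to their response to $\alpha_0$, which is zero. This continuity in $t$ is the weak point.

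To make it rigorous I would use a cap of radius $\epsilon$ in Bob's state space around $P_\phi$:
\begin{itemize}
\item hidden states outside the cap contribute $\phi$-population at most of the order of their weight, which is $o(1)$;
\item hidden states inside the cap contribute coherence bounded by $\sqrt{\epsilon}$ times their $\phi$-population.
\end{itemize}
Choosing $\epsilon\to0$ together with $t$ then contradicts $|tb|\gtrsim|t|$.
\end{enumerate}

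The main obstacle is step 4. LHS response functions need not be continuous in the measurement, so the limit argument must use only the operator identities $\sum_\xi\sigma_\xi=\rho_B$ and positivity, not continuity of $p(\cdot|\lambda)$. My first attempt would be to compare $\sigma_{\xi_t}$ against $\sigma_{\alpha_0}$ through the completeness relation within each basis, bounding the cap-outside weight by $\Tr(\sigma_{\xi_t})-a+O(t)$. If that bound fails, I would fall back on the detailed boundary theorem cited from Ref.~\cite{ZhangChenBoundary2026}.
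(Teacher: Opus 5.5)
Your NPT argument is correct and is the paper's: in a Bob basis containing $\phi,\beta$, the principal block of $\rho^{\Gamma_B}$ on $\{\ket{\alpha_0\beta},\ket{\alpha_1\phi}\}$ has determinant $-|b|^2$. On the steering side, you should also record that $\langle\alpha_0\beta'|\rho|\alpha_0\beta'\rangle=0$ for every $\beta'\perp\phi$ forces $P_{\phi^\perp}B=0$. This is what makes the coherence $tb+O(t^2)$ and the kernel population $\Tr(P_{\phi^\perp}\sigma_{\xi_t})=O(t^2)$; you use both without justification.

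The genuine gap is step 4, which carries the entire steering conclusion, and your cap sketch does not close. Put $d(\tau)=1-\langle\phi|\tau|\phi\rangle$ and $b(\tau)=\langle\phi|\tau|\beta\rangle$, so $|b(\tau)|^2\le d(\tau)$ for normalized $\tau$. Hidden states with $d>\epsilon$ contribute coherence at most $\Tr(P_{\phi^\perp}\sigma_{\xi_t})/\sqrt{\epsilon}=O(t^2/\sqrt{\epsilon})$. This is below $|b|t/2$ only for $\epsilon\ge K^2t^2$ with $K$ large. But then your inside bound, $\sqrt{\epsilon}$ times the $\phi$-population, is of order $Kat$, which is no smaller than $|b|t$. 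No coupling of $\epsilon$ to $t$ repairs this, because that population may be carried by hidden states sitting exactly at $P_\phi$, and those contribute no coherence. Your claimed $o(1)$ outside weight is also unjustified. Completeness relations within a basis cannot tell you how the response weight splits between $P_\phi$ and nearby states either.

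The missing idea is that $\mu$ is independent of $t$: only the responses vary, and $0\le p(\xi_t|\lambda)\le1$. Excluding the point $P_\phi$, the inside contribution is at most $Kt\,\mu(\{\lambda:0<d(\tau_\lambda)\le K^2t^2\})$. These punctured caps decrease to the empty set as $t\downarrow0$, so their measure tends to zero by continuity from above of a finite measure. Hence the inside contribution is $o(t)$ and the outside one is $\le(c/K)t$, contradicting $|b|t$ for large $K$. This is the paper's shrinking-cap lemma, and it needs no continuity of response functions.

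In your own framework the same observation gives $\Tr Y_t\le\Tr(P_{\phi^\perp}\sigma_{\xi_t})/\epsilon+\mu(\{0<d\le\epsilon\})\to0$ with $\epsilon=t$. Your step-3 inequality $|\langle\phi|Y_t|\beta\rangle|^2\le\Tr Y_t\,\langle\beta|\sigma_{\xi_t}|\beta\rangle$ then forces $b=0$. That would be a valid and slightly more direct finish, since it needs no compression to $\Span\{\phi,\beta\}$. As written, however, this step is absent, and deferring to the cited reference is not a proof.
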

For the NPT conclusion, the same matrix element becomes the off-diagonal entry of a principal block with a zero diagonal entry after partial transpose on Bob, and hence gives a negative determinant.

Together, Lemmas~\ref{lem:rankforced} and \ref{lem:boundarycertificate} establish the active branch. If the forced contact is degenerate, a Schur complement peels off one product layer and transfers the same question to an entangled residual. The proof therefore follows the dichotomy
\begin{equation}
\begin{array}{rcl}
\text{active contact} &\Rightarrow& \text{NPT and projective steering},\\
\text{degenerate contact} &\Rightarrow& \text{Schur peel and recurse}.
\end{array}
\label{eq:dichotomy}
\end{equation}
In rank two the first degenerate peel leaves a pure residual, recovering the original rank-2 branch; in higher rank each degenerate peel lowers the rank and keeps the recursion finite.

In the rank window, a pure contact is forced dimension-theoretically, while an active contact yields projective steering through a local boundary-geometric argument. The latter uses only the two-dimensional trusted slice generated by the support vector and one kernel vector. This global--local separation is why the argument extends beyond two qubits without relying on a Bloch-ellipsoid picture.

\textit{Schur-complement recursion.--} To analyze the degenerate branch, let $\rho$ have effective local supports $\mathbb C^m\otimes\mathbb C^n$, and let
\[
    A:=\bra{\alpha_0}\rho\ket{\alpha_0}=aP_{\phi},\qquad a>0,
\]
be a degenerate pure contact for $A\to B$, with $\alpha_0$ and $\phi$ unit. Write $\mathbb C^m=\mathbb C\alpha_0\oplus\mathcal K_A$, where $\mathcal K_A=\alpha_0^\perp$. For $\eta\in\mathcal K_A$, set
\[
    B_\eta=(\bra{\alpha_0}\otimes I)\rho(\ket{\eta}\otimes I).
\]
Degeneracy means
\[
    P_{\supp A}B_\eta P_{\Ker A}=0
    \quad\text{for every }\eta\in\mathcal K_A.
\]

\begin{lemma}[Degenerate-contact peel]
\label{lem:degenerate_peel_main}
There exist $\gamma\in\mathcal K_A$ and a positive operator $\rho^{(1)}$, supported on $\mathcal K_A\otimes\mathbb C^n$, such that
\begin{equation}
    \rho=a\ketbra{\chi}\otimes P_{\phi}+\rho^{(1)},
    \qquad
    \chi=\alpha_0+\gamma.
    \label{eq:main_peel}
\end{equation}
Writing $\rho_A^{(1)}=\Tr_B\rho^{(1)}$, the residual satisfies
\begin{equation}
    \rank\rho^{(1)}=\rank\rho-1,
    \qquad
    \dim\supp\rho_A^{(1)}=m-1.
    \label{eq:main_update}
\end{equation}
If $\rho$ is entangled, then $\rho^{(1)}$ is entangled.
\end{lemma}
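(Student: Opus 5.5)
We will write $\rho$ in block form with respect to $\mathbb C^m=\mathbb C\alpha_0\oplus\mathcal K_A$:
\[
\rho=\begin{pmatrix} A & B\\ B^\dagger & D\end{pmatrix},\qquad A=aP_\phi,
\]
where $B:\mathcal K_A\otimes\mathbb C^n\to\mathbb C^n$ is the row operator with entries $B_\eta$, and $D$ acts on $\mathcal K_A\otimes\mathbb C^n$. The plan is a Schur complement with respect to the pure corner $A$.

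\emph{Structure of $B$.} Positivity and $A\beta=0$ for $\beta\perp\phi$ give $B^\dagger\beta=0$, so the range of $B$ lies in $\mathbb C\phi$. Hence $B=\ket{\phi}\bra{w}$ for some $w\in\mathcal K_A\otimes\mathbb C^n$. Degeneracy says $\bra{\phi}B_\eta\ket{\beta}=0$ for all $\eta$ and all $\beta\perp\phi$. Since $\bra{\phi}B_\eta$ is exactly the slice $\langle w|(\ket{\eta}\otimes\cdot)$, every slice of $w$ is proportional to $\bra{\phi}$. Therefore $w=\bar\gamma'\otimes\phi$ for a single vector in $\mathcal K_A$, and we get $B=\ket{\phi}\bigl(\bra{\gamma'}\otimes\bra{\phi}\bigr)$. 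Put $\gamma=\gamma'/a$, with conjugation conventions chosen so that the computation closes.

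\emph{Peel.} Set $\rho^{(1)}:=\rho-a\ketbra{\chi}\otimes P_\phi$ with $\chi=\alpha_0+\gamma$. Expanding $a\ketbra{\chi}\otimes P_\phi$ in blocks gives $A$ in the corner, $a\ket{\alpha_0}\bra{\gamma}\otimes P_\phi=B$ off the diagonal, and $a\ketbra{\gamma}\otimes P_\phi=B^\dagger A^{+}B$ in the lower block. So $\rho^{(1)}=0\oplus(D-B^\dagger A^{+}B)$ is the generalized Schur complement. It is positive because $\rho\ge0$ and $\Ran B\subseteq\Ran A$. It is supported on $\mathcal K_A\otimes\mathbb C^n$.

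\emph{Rank and support.} We use the standard identity $\rank\rho=\rank A+\rank(\text{Schur complement})$, valid when $\Ran B\subseteq\Ran A$, together with $\rank A=1$. This gives $\rank\rho^{(1)}=r-1$.

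For the Alice support, $\supp\rho_A^{(1)}\subseteq\mathcal K_A$, so its dimension is at most $m-1$. Suppose some nonzero $\eta\in\mathcal K_A$ annihilates $\rho^{(1)}$ on the Alice side. Then $\Ran\rho\subseteq\Span\{\chi\otimes\phi\}+\Ran\rho^{(1)}$, and every vector in this space contracted with $\bra{\eta}$ gives $\langle\eta|\chi\rangle\phi$. One then looks for a suitable combination $\eta'=\eta+c\,\alpha_0$ such that $(\bra{\eta'}\otimes I)\psi=0$ for all $\psi\in\Ran\rho$. Since $\langle\eta'|\chi\rangle=\langle\eta|\gamma\rangle+\bar c$, choosing $\bar c=-\langle\eta|\gamma\rangle$ kills the $\chi\otimes\phi$ term. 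The $\rho^{(1)}$ term vanishes because $\alpha_0\perp\mathcal K_A$. This contradicts effectiveness of $m$, so $\dim\supp\rho_A^{(1)}=m-1$.

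\emph{Entanglement.} If $\rho^{(1)}$ were separable, then $\rho=a\ketbra{\chi}\otimes P_\phi+\rho^{(1)}$ would be a sum of a product state and a separable state, hence separable. So $\rho$ entangled implies $\rho^{(1)}$ entangled.

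\emph{Main obstacle.} The delicate step is the first one: converting degeneracy into the exact product form $B=\ket{\phi}(\bra{\gamma'}\otimes\bra{\phi})$. This requires careful bookkeeping of the antilinear slice conventions so that the peeled layer is exactly $a\ketbra{\chi}\otimes P_\phi$ with a single $\gamma$. The positivity of the generalized Schur complement, given $\Ran B\subseteq\Ran A$, will be quoted as a standard fact.
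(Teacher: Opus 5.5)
Your proposal is correct and follows essentially the paper's route. Positivity plus degeneracy confine the coupling out of the contact corner to $\mathcal K_A\otimes\mathbb C\phi$ (the paper's $B_\eta=a\langle\gamma|\eta\rangle P_\phi$); a Schur complement at that corner gives the peel; and rank, Alice support, and entanglement then follow as you describe. The only differences are cosmetic. The paper first discards $\alpha_0\otimes\phi^\perp\subseteq\Ker\rho$, so its corner is the positive scalar $a$ and no pseudoinverse is needed, and it obtains rank and support from $\Ran(X+Y)=\Ran X+\Ran Y$ rather than from the rank-additivity formula and your kernel-vector argument.

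The step you flag as the main obstacle involves no antilinear subtlety. The vector $w$ itself lies in $\mathcal K_A\otimes\mathbb C\phi$, so $w=\gamma'\otimes\phi$ with no conjugation, and $\gamma=\gamma'/a$.
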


\emph{Sketch of proof.} Positivity of the two-by-two block determined by $\alpha_0$ and a tangent vector $\eta$ first gives $B_\eta^\dagger\beta=0$ for every $\beta\perp\phi$, hence $\Ran B_\eta\subseteq\mathbb C\phi$.  The degeneracy assumption also gives $B_\eta P_{\phi^\perp}=0$, so $B_\eta$ is a scalar multiple of $P_{\phi}$.  Thus the whole off-diagonal coupling from the contact corner lies in the product line $\mathcal K_A\otimes\mathbb C\phi$.  Completing the square in the one-dimensional corner $aP_{\alpha_0\otimes\phi}$ gives Eq.~\eqref{eq:main_peel}.  The first term is a positive rank-one product operator, and the Schur complement is positive and supported on $\mathcal K_A\otimes\mathbb C^n$.  The two ranges intersect trivially. Since the peeled product vector has a nonzero $\alpha_0$ component, the Alice support decomposes as
\[
    \supp\rho_A
    =\mathbb C\chi\dotplus\supp\rho_A^{(1)},
\]
so effective support gives Eq.~\eqref{eq:main_update}.  Finally, if $\rho^{(1)}$ were separable, then Eq.~\eqref{eq:main_peel} would be a separable decomposition of $\rho$.  Details, including the lifting of active contacts back through the peel, are given in the Supplemental Material~\cite{SM}. \hfill$\square$

The rank-forcing condition is stable under this peel.  If
\begin{equation}
    m-1\ge(n-1)(r-1),
    \qquad r=\rank\rho,
    \label{eq:main_rank_window}
\end{equation}
then after a degenerate peel the residual has $r_1=r-1$, $m_1=m-1$, and trusted effective dimension $n_1\le n$. If the residual is not pure, its entanglement gives $m_1,n_1\ge2$, so the rank-forcing count may be reapplied. Hence
\begin{align}
    m_1-1-(n_1-1)(r_1-1)
    &= [m-1-(n-1)(r-1)]\notag\\
    &\quad +(n-2)\notag\\
    &\quad +(n-n_1)(r-2)\ge0.
\end{align}
Thus a new pure contact is again forced unless the residual has already become pure.

\begin{theorem}[Directional rank-forced steering]
\label{thm:rankwindow}
Let $\rho$ be an entangled state on effective local supports $\mathbb C^m\otimes\mathbb C^n$, with $m,n\ge2$ and rank $r$.  If
\begin{equation}
    r\le 1+\left\lfloor\frac{m-1}{n-1}\right\rfloor,
    \label{eq:rankwindow_main}
\end{equation}
then $\rho$ is projectively steerable from $A$ to $B$.  Exchanging the parties gives the symmetric statement: if
\begin{equation}
    r\le 1+\left\lfloor\frac{n-1}{m-1}\right\rfloor,
\end{equation}
then $\rho$ is projectively steerable from $B$ to $A$.
\end{theorem}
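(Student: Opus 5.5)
The proof will be an induction on the rank $r$, run along the dichotomy~\eqref{eq:dichotomy}. Lemma~\ref{lem:rankforced} supplies a pure contact. If the contact is active, Lemma~\ref{lem:boundarycertificate} finishes. If it is degenerate, Lemma~\ref{lem:degenerate_peel_main} peels off one product layer and we recurse on the residual. The $B\to A$ statement then follows by exchanging the roles of the parties.

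\emph{Base of the recursion.} Assume \eqref{eq:rankwindow_main}, which is equivalent to \eqref{eq:main_rank_window}. By Lemma~\ref{lem:rankforced} there is a pure contact $\bra{\alpha_0}\rho\ket{\alpha_0}=aP_\phi$. If it is active, Lemma~\ref{lem:boundarycertificate} gives projective steerability from $A$ to $B$ directly.

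\emph{Degenerate step.} If the contact is degenerate, Lemma~\ref{lem:degenerate_peel_main} gives the decomposition \eqref{eq:main_peel}. The residual $\rho^{(1)}$ is entangled, has rank $r-1$, and has effective Alice dimension $m-1$. Its effective Bob dimension satisfies $n_1\le n$.
\begin{itemize}
\item If $\rho^{(1)}$ is pure, then $r=2$. An entangled pure state has Schmidt rank at least two, so it is projectively steerable; the introduction records this.
\item If $\rho^{(1)}$ is not pure, its entanglement forces $m_1,n_1\ge2$. The displayed inequality after Lemma~\ref{lem:degenerate_peel_main} shows that the rank window is preserved. So $\rho^{(1)}$ satisfies the hypotheses of the theorem at strictly smaller rank, and by induction it is projectively steerable from $A$ to $B$.
\end{itemize}
Since the rank drops at every peel, the recursion terminates after at most $r-1$ steps. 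It ends either at an active contact of some residual $\rho^{(k)}$ or at a pure entangled residual.

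\emph{Lifting steering back to $\rho$.} This is the main obstacle: steerability of $\rho^{(1)}$ does not by itself imply steerability of $\rho$. Adding a product term $a\ketbra{\chi}\otimes P_\phi$ could in principle supply an LHS model, so we cannot simply argue "steerable plus product is steerable." The plan is to lift the \emph{contact certificate} instead of the conclusion.

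Take the final certificate of the residual. In the active case this is a pure contact $\alpha'$ of $\rho^{(1)}$, with $\alpha'\in\mathcal K_A$ and Bob state $\propto P_{\phi'}$, together with an active tangent block. Because the peeled vector $\chi$ has a nonzero $\alpha_0$ component, we use the direct-sum decomposition of $\supp\rho_A$. Choose the Alice outcome $\tilde\alpha$ orthogonal to $\chi$ inside $\Span\{\chi\}\dotplus\supp\rho^{(1)}_A$. Then the product layer is invisible along $\tilde\alpha$, and $\bra{\tilde\alpha}\rho\ket{\tilde\alpha}=\bra{\tilde\alpha}\rho^{(1)}\ket{\tilde\alpha}$. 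This gives a pure contact of $\rho$.

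Next, the support–kernel block of $\rho$ along a tangent direction differs from that of $\rho^{(1)}$ only by a term $\propto \ket{\phi}\bra{\phi}$ times a scalar. We want $P_{\supp}\,B\,P_{\Ker}$ to stay nonzero after adding this term. I expect the choice of tangent direction to need care here, for example picking it orthogonal to $\chi$ as well, so that the rank-one correction vanishes. A pure entangled residual also furnishes an active contact, namely a Schmidt vector with a tangent mixing Schmidt components, and it lifts the same way. Applying Lemma~\ref{lem:boundarycertificate} to $\rho$ then completes the proof. The Supplemental Material is said to contain this "lifting of active contacts," which I would invoke at this step.
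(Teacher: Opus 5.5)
Your proposal follows the paper's proof: a rank-forced pure contact, the active/degenerate dichotomy, the Schur peel that preserves the rank window, termination because the rank drops, and then lifting the final active certificate (rather than steerability itself) back through each peel using contact vectors orthogonal to $\chi$, which is exactly the paper's map $Jx=x-\langle\gamma|x\rangle\alpha_0$. Two clean-ups are needed: the induction hypothesis should be ``has an active pure contact'' rather than ``is steerable''; and in the lift the product layer already drops out of $C_\rho(Jx_0,\cdot)$ once $Jx_0\perp\chi$, so the only step needing care is re-orthogonalizing the tangent $Jx_1$ against $Jx_0$, which is harmless because it adds only a multiple of $A=qP_\varphi$, whose support--kernel block vanishes.
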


\emph{Proof.}  We prove the first direction.  If $r=1$, entanglement means that $\rho$ is a pure entangled state, and a pair of nonzero Schmidt coefficients gives a pure contact with an active tangent direction.  Lemma~\ref{lem:boundarycertificate} then gives projective steering.  If $r>1$, Lemma~\ref{lem:rankforced} gives a pure contact.  If this contact is active, Lemma~\ref{lem:boundarycertificate} again applies.  If it is degenerate, Lemma~\ref{lem:degenerate_peel_main} peels off one product layer and leaves an entangled residual satisfying the same rank-forcing inequality.  Iterating, the rank strictly decreases at every degenerate step.  The process therefore terminates after finitely many steps at either an active contact or a pure entangled residual; in the latter case, a Schmidt-pair tangent supplies an active contact.  The active contact lifts through all previous peels: the linear maps $Jx=x-\langle\gamma|x\rangle\alpha_0$ preserve unnormalized blocks, and normalization plus tangent orthogonalization preserve support--kernel nonvanishing.  Hence the original state is projectively steerable. \hfill$\square$

Setting $\rank\rho=2$, if $m\ge n$, then Eq.~\eqref{eq:rankwindow_main} holds and Theorem~\ref{thm:rankwindow} gives steering from $A$ to $B$.  If $n\ge m$, the exchanged statement gives steering from $B$ to $A$.  For equal effective dimensions both conditions hold, proving Theorem~\ref{thm:ranktwo}.  The general criterion therefore places the rank-two theorem at the base of the broader directional hierarchy.

\textit{Certificate and rank window.--} The proof leaves a compact certificate beyond the rank-2 theorem.  A forced pure contact gives unit vectors $\alpha_0$ and $\phi$ with $\bra{\alpha_0}\rho\ket{\alpha_0}=aP_{\phi}$.  A tangent direction $\alpha_1$ is active precisely when some scalar $c_\beta=\langle\phi|B_{\alpha_1}|\beta\rangle$, with $\beta$ a unit vector orthogonal to $\phi$, is nonzero.  The two-dimensional Bob compression onto $\Span\{\phi,\beta\}$ already contains the boundary-steering obstruction, and the same scalar gives a negative $2\times2$ partial-transpose minor.  Thus the certificate consists of a contact vector, a tangent vector, and a kernel direction.

The rank-forced window also has no PPT entanglement.  The low-rank PPT separability theorem says that a PPT state of rank at most $\max\{m,n\}$ is separable~\cite{Peres1996,Horodecki1996,HorodeckiRMP2009,HorodeckiLowRank2000}.  For $m, n \geq 2$, the directional window satisfies
\begin{equation}
    r\le 1+\left\lfloor\frac{m-1}{n-1}\right\rfloor\le m\le\max\{m,n\},
\end{equation}
so every entangled state in this window is NPT.  The nonzero support--kernel tangent block above supplies the steering conclusion.  For $d\otimes2$ the direction $d\to2$ gives $r\le d$; for $d\otimes d$, Theorem~\ref{thm:ranktwo} gives two-way projective steering at rank two.

\textit{Discussion and outlook.--} Theorem~\ref{thm:ranktwo} identifies rank two as the first genuine mixed-state test of the entanglement--steering hierarchy and settles it.  Once the steering orientation is chosen according to the effective local dimensions, there is no exceptional rank-2 entangled family with a projective LHS model.  The result covers the full rank-two class: the first possible essential separation between entanglement and projective steering is pushed beyond rank two, and measurement optimization disappears from the first mixed-state rank.

Beyond the rank-2 theorem, the proof also leaves a usable certificate.  First pass to effective local supports and choose the steering direction.  Second, check the rank window and solve the rank-one contraction problem for $L=\{V(x):x\in(\mathbb C^m)^*\}$.  This identifies a product-zero slice $\alpha\otimes\phi^\perp\subseteq\Ker\rho$ and hence a pure trusted conditional state.  Third, for a unit tangent direction $\alpha_1\perp\alpha_0$, inspect the neighboring block $B_{\alpha_1}:=(\bra{\alpha_0}\otimes I)\rho(\ket{\alpha_1}\otimes I)$ for a nonzero scalar $\langle\phi|B_{\alpha_1}|\beta\rangle$.  Such a scalar produces a negative partial-transpose minor and certifies projective steering.  Low-rank linear algebra is thereby converted directly into steering.

The argument applies across arbitrary finite dimensions and general low-rank families.  It uses only the support and kernel of the density operator, the parts of a low-rank state that are often fixed by preparation, reconstruction, variational structure, dissipative dark-state manifolds, or the chosen bipartition of a larger system.  In such settings structural data are typically more robust than an optimized steering inequality and cheaper to use than a full SDP search.  The present theorem shows that, at rank two, this robust information already fixes the projective-steering status of every entangled state.

The conceptual message is that steering can be forced by how a state meets the boundary of the trusted positive cone.  At an active contact, positivity supplies only a quadratic inward filling while quantum mechanics supplies a linear tangential coherence; an LHS model would have to concentrate hidden-state weight in caps shrinking to a point.  At rank two, the global rank constraint and the local boundary obstruction lock together completely, turning entanglement itself into projective steering. Extensions to multipartite settings remain an open direction, where the boundary geometry is considerably richer and a different structural analysis --- beyond the two-body Schur recursion --- would be required.


\emph{Acknowledgments.---}This work is supported by the Quantum Science and Technology-National Science and Technology Major Project (Grant No. 2024ZD0301000), and the National Natural Science Foundation of China (Grant No. 12275136).

\end{document}


\title{Supplemental Material for ``Every Rank-2 Bipartite Entangled State is Projectively Steerable''}

\author{Yu-Xuan Zhang}
\affiliation{School of Physics, Nankai University, Tianjin 300071, People's Republic of China}

\author{Jing-Ling Chen}
\email{chenjl@nankai.edu.cn}
\affiliation{Theoretical Physics Division, Chern Institute of Mathematics, Nankai University, Tianjin 300071, People's Republic of China}

\date{\today}

\maketitle

This Supplemental Material provides the technical details behind the Letter.  Sec.~S1 fixes the projective-assemblage and effective-support conventions.  Sec.~S2 proves the rank-forced pure-contact lemma and the sharpness of its universal pure-contact window.  Secs.~S3 and S4 establish the boundary-contact steering certificate, including the associated NPT minor.  Sec.~S5 develops the Schur-complement peel for degenerate contacts and identifies the rank-two branch as its first case.  Sec.~S6 gives the finite rank-forcing recursion used in the directional theorem.  Sec.~S7 explains the consequences of the rank window, its relation to PPT separability, and the use of the support--kernel certificate beyond the forced regime.  The companion work~\cite{ZhangChenBoundary2026} develops the boundary geometry in a broader setting.

\section{S1. Conventions and effective supports}
\label{sec:conventions}

Let $\rho$ be a positive operator on $\mathcal H_A\otimes\mathcal H_B$.  For the steering direction $A\to B$, Alice is untrusted and Bob is trusted.  A rank-one projective outcome with unit vector $\alpha\in\mathcal H_A$ prepares the unnormalized Bob state
\begin{equation}
    \sigma_\alpha=\Tr_A[(P_{\alpha}\otimes I_B)\rho]=\bra{\alpha}\rho\ket{\alpha}.
\end{equation}
When $\Tr\sigma_\alpha>0$, the normalized conditional state is $\sigma_\alpha/\Tr\sigma_\alpha$; the assemblage itself is the collection of unnormalized conditional states obtained from all complete rank-one projective measurements on the untrusted side.  A state is projectively steerable from $A$ to $B$ if this projective assemblage has no local-hidden-state (LHS) representation.  Throughout, $\ket{x}\bra{x}$ denotes the unnormalized rank-one operator associated with $x$.  If $S$ is a subspace, $P_S$ denotes the orthogonal projection onto $S$; for a nonzero vector $x$, we write $P_x:=P_{\mathbb Cx}$.  Thus $\ket{\alpha}\otimes\phi^\perp\subseteq\Ker\rho$ means $\ket{\alpha}\otimes\ket{\beta}\in\Ker\rho$ for every $\beta\perp\phi$.  Vectors that label actual projective outcomes are taken to be unit representatives.

The effective Alice support is $\supp\rho_A$, where $\rho_A=\Tr_B\rho$, and similarly for Bob.  All steering statements in the Letter are made after restricting to the tensor product of the two effective supports.  This restriction preserves all nonzero conditional states.  Indeed, if $\alpha\perp\supp\rho_A$, then $\bra{\alpha}\rho_A\ket{\alpha}=0$, so positivity gives $(\bra{\alpha}\otimes I)\rho^{1/2}=0$ and hence $\sigma_\alpha=0$.  Such a vector is a zero-probability outcome for every measurement and carries no steering information.

We use the elementary fact that if $M\ge0$ and $\langle v|M|v\rangle=0$, then $M\ket{v}=0$.  It follows by writing $M=X^\dagger X$ and observing that $\|Xv\|^2=0$.

\section{S2. Rank-forced pure contacts}
\label{sec:rankforcing}

Let $r=\rank\rho$ and choose a square-root decomposition
\begin{equation}
    \rho=\sum_{j=1}^{r}\ketbra{\psi_j},
\end{equation}
where the vectors $\psi_j\in\mathbb C^m\otimes\mathbb C^n$ are linearly independent.  For a covector $x\in(\mathbb C^m)^*$, define the $n\times r$ matrix
\begin{equation}
    V(x)=\big[(x\otimes I)\psi_1,\ldots,(x\otimes I)\psi_r\big].
\end{equation}
For a unit representative of a projective outcome $\alpha$, we evaluate this linear map at $x=\bra{\alpha}$. Then
\begin{equation}
    \sigma_\alpha=V(\bra{\alpha})V(\bra{\alpha})^\dagger,
    \qquad
    \rank\sigma_\alpha=\rank V(\bra{\alpha}).
\end{equation}
The map $x\mapsto V(x)$ is linear.  The identification of a ket $\alpha$ with $\bra{\alpha}$ is antilinear, but the projective dimension count is carried out in the dual contraction space.  On the effective Alice support it is injective: if $V(\bra{\alpha})=0$, then $\sigma_\alpha=0$, hence $\bra{\alpha}\rho_A\ket{\alpha}=0$, which contradicts $\alpha\in\supp\rho_A$ unless $\alpha=0$.  Therefore its image $L\subset M_{n\times r}(\mathbb C)$ has dimension $m$.

A nonzero projective outcome prepares a pure Bob conditional state precisely when $V(\bra{\alpha})$ has rank one.  The lemma below records the dimension count used in the Letter.

\begin{lemma}[Rank-forced pure contact]
\label{lem:rank_forced_SM}
For the steering direction $A\to B$, let $m$ and $n$ be the effective local dimensions and let $r=\rank\rho$.  If
\begin{equation}
    r\le 1+\left\lfloor\frac{m-1}{n-1}\right\rfloor,
    \label{eq:SM_rank_bound}
\end{equation}
then there exists a projective outcome with unit representative $\alpha$ such that $\sigma_\alpha$ is nonzero and rank one.  Equivalently, for some Bob line $\mathbb C\phi$,
\begin{equation}
    \ket{\alpha}\otimes\phi^\perp\subseteq\Ker\rho.
    \label{eq:SM_tensor_zero}
\end{equation}
\end{lemma}

\emph{Proof.}  Over $\mathbb C$, in projective space $\mathbb P(M_{n\times r})$, the nonzero rank-one matrices form the Segre variety
\begin{equation}
    \Sigma_1=\mathbb P^{n-1}\times\mathbb P^{r-1},
    \qquad
    \dim\Sigma_1=n+r-2.
\end{equation}
The projectivization $\mathbb P L$ has dimension $m-1$, and the ambient projective dimension is $nr-1$.  The projective dimension theorem~\cite{Harris1992} gives
\begin{equation}
    \dim(\mathbb P L\cap\Sigma_1)
    \ge (m-1)+(n+r-2)-(nr-1).
\end{equation}
Thus $\mathbb P L$ must meet $\Sigma_1$ whenever
\begin{equation}
    (m-1)+(n+r-2)\ge nr-1,
\end{equation}
which is equivalent to $m>(n-1)(r-1)$, or to Eq.~\eqref{eq:SM_rank_bound}.  Hence some nonzero $V(\bra{\alpha})$ has rank one. Choosing a unit representative of the corresponding Alice line, $\sigma_\alpha=V(\bra{\alpha})V(\bra{\alpha})^\dagger$ is nonzero and rank one.

We next relate this to the tensor-zero form.  If $\sigma_\alpha=aP_{\phi}$ with $a>0$ and $\phi$ is chosen unit, then every $\beta\perp\phi$ satisfies
\begin{equation}
    0=\langle\beta|\sigma_\alpha|\beta\rangle
    =\langle\alpha,\beta|\rho|\alpha,\beta\rangle.
\end{equation}
Since $\rho\ge0$, this implies $\ket{\alpha}\otimes\ket{\beta}\in\Ker\rho$.  Hence Eq.~\eqref{eq:SM_tensor_zero} holds.  Conversely, Eq.~\eqref{eq:SM_tensor_zero} says that the range of $\sigma_\alpha$ is contained in $\mathbb C\phi$; because $\alpha$ is in the effective support, $\sigma_\alpha\ne0$, so it has rank one. \hfill$\square$

\paragraph*{Directional character.}  The bound is intrinsically directional.  For $A\to B$, the untrusted dimension $m$ enlarges the contraction space, whereas the trusted dimension $n$ enlarges the determinantal variety that must be intersected.  Exchanging the parties gives the bound with $m$ and $n$ interchanged.  In particular, for rank two the condition becomes $m\ge n$ for $A\to B$, while the reverse direction requires $n\ge m$.

\begin{lemma}[Sharpness of the pure-contact window]
\label{lem:SM_sharp_pure_contact_window}
Fix the steering direction $A\to B$, and let $m=\dim A$ and $n=\dim B$ be the effective dimensions.  Assume that full-effective states of exact rank $r$ exist.  If
\begin{equation}
    m\le (n-1)(r-1),
    \label{eq:SM_sharp_contact_window}
\end{equation}
then there exists a rank-$r$ state on the effective $m\times n$ system with no $A\to B$ pure contact.
\end{lemma}

\emph{Proof.}  Let
\begin{equation}
    M=\operatorname{Hom}(\mathbb C^r,\mathbb C^n)\cong M_{n\times r}(\mathbb C).
\end{equation}
The projectivization of the nonzero rank-one matrices is the Segre variety
\begin{equation}
    \Sigma_1\simeq \mathbb P^{n-1}\times\mathbb P^{r-1}
    \subset \mathbb P(M),
\end{equation}
whose codimension in $\mathbb P(M)$ is
\begin{equation}
    (nr-1)-(n+r-2)=(n-1)(r-1).
\end{equation}

We first construct a rank-one-free linear subspace of this critical dimension.  Index rows by $0,\ldots,n-1$ and columns by $0,\ldots,r-1$, and set
\begin{equation}
    S=\left\{W=(W_{ij})\in M:
    \sum_{i+j=k}W_{ij}=0\quad\text{for every }k=0,\ldots,n+r-2\right\}.
\end{equation}
The $n+r-1$ anti-diagonal equations are independent, so
\begin{equation}
    \dim S=nr-(n+r-1)=(n-1)(r-1).
\end{equation}
If $W=uv^T\in S$ has rank at most one, then the anti-diagonal sums are the coefficients of
\begin{equation}
    \left(\sum_{i=0}^{n-1}u_i t^i\right)
    \left(\sum_{j=0}^{r-1}v_j t^j\right).
\end{equation}
All these coefficients vanish.  Since $\mathbb C[t]$ has no zero divisors, $u=0$ or $v=0$, and hence $W=0$.  Thus $S$ contains no nonzero rank-one matrix.  Since $m\le\dim S$, any $m$-plane inside $S$ is rank-one-free.

Let $\operatorname{Gr}(m,M)$ be the Grassmannian of $m$-planes in $M$.  We now impose the two full-effective support conditions in addition to rank-one-freeness.

First, the condition
\begin{equation}
    \mathbb P L\cap\Sigma_1=\varnothing
\end{equation}
is Zariski open in $\operatorname{Gr}(m,M)$.  Indeed, the incidence set
\begin{equation}
    \mathcal I_1=
    \{(L,[W])\in \operatorname{Gr}(m,M)\times\Sigma_1:
    W\in L\}
\end{equation}
is closed, and its projection to $\operatorname{Gr}(m,M)$ is closed because $\Sigma_1$ is projective.  Therefore the complement is open.  This open set is nonempty by the $m$-planes contained in $S$ constructed above.

Next consider the condition
\begin{equation}
    \bigcap_{W\in L}\ker W=\{0\}.
\end{equation}
Its failure is equivalent to the existence of a nonzero vector $c\in\mathbb C^r$ such that $Wc=0$ for all $W\in L$.  Hence the bad locus is the projection of the closed incidence set
\begin{equation}
    \mathcal I_{\ker}=
    \{(L,[c])\in \operatorname{Gr}(m,M)\times\mathbb P^{r-1}:
    Wc=0\text{ for all }W\in L\},
\end{equation}
and is closed.  Thus the condition is Zariski open.  It is nonempty: since an exact rank-$r$ state on $A\otimes B$ exists, $r\le mn$.  Therefore one can choose $m$ matrices $W_1,\ldots,W_m\in M$ such that the vertically stacked map
\begin{equation}
    c\longmapsto (W_1c,\ldots,W_mc)\in (\mathbb C^n)^m
\end{equation}
is injective.  The set of such $m$-tuples is a nonempty Zariski open subset of $M^m$, and the set of linearly independent $m$-tuples is also a nonempty Zariski open subset of $M^m$.  Since $M^m$ is irreducible, their intersection is nonempty.  The span of any tuple in this intersection gives an $m$-plane satisfying the common-kernel condition.

Finally consider the condition
\begin{equation}
    \Span\{\Ran W:W\in L\}=\mathbb C^n.
\end{equation}
Its failure is equivalent to the existence of a nonzero covector $\eta\in(\mathbb C^n)^*$ such that $\eta W=0$ for all $W\in L$.  Hence the bad locus is the projection of the closed incidence set
\begin{equation}
    \mathcal I_{\Ran}=
    \{(L,[\eta])\in \operatorname{Gr}(m,M)\times\mathbb P((\mathbb C^n)^*):
    \eta W=0\text{ for all }W\in L\},
\end{equation}
and is closed.  Thus the condition is Zariski open.  It is nonempty because the assumed nonemptiness of the full-effective exact-rank layer implies that Bob can have full effective support at rank $r$.  Since the Bob support of a rank-$r$ state is spanned by the Bob supports of $r$ vectors in $A\otimes B$, each of dimension at most $m$, one must have $n\le mr$.  Therefore one can choose $m$ matrices $W_1,\ldots,W_m\in M$ such that the horizontal block matrix
\begin{equation}
    [\,W_1\ W_2\ \cdots\ W_m\,]
\end{equation}
has rank $n$.  As above, intersecting this nonempty Zariski open condition in $M^m$ with linear independence gives an $m$-plane satisfying the full-range condition.

We have therefore obtained three nonempty Zariski open subsets of the irreducible variety $\operatorname{Gr}(m,M)$.  Their intersection is nonempty.  Choose $L\subset M$ in this intersection, so that
\begin{equation}
    \mathbb P L\cap\Sigma_1=\varnothing,
    \qquad
    \bigcap_{W\in L}\ker W=\{0\},
    \qquad
    \Span\{\Ran W:W\in L\}=\mathbb C^n.
    \label{eq:SM_sharp_L_properties}
\end{equation}

Choose a linear isomorphism $F:A^*\to L$.  For $j=1,\ldots,r$, define $\psi_j\in A\otimes B$ by requiring
\begin{equation}
    (x\otimes I)\psi_j=F(x)e_j,
    \qquad x\in A^*,
\end{equation}
where $e_1,\ldots,e_r$ is the standard basis of $\mathbb C^r$.  This is well-defined because the right-hand side is a linear map from $A^*$ to $B$.  Set
\begin{equation}
    \rho=\sum_{j=1}^{r}\ketbra{\psi_j}.
\end{equation}

We first check the rank.  If $\sum_j c_j\psi_j=0$, then for every $x\in A^*$,
\begin{equation}
    F(x)c=0.
\end{equation}
Since $F(A^*)=L$, Eq.~\eqref{eq:SM_sharp_L_properties} gives $c=0$.  Hence $\psi_1,\ldots,\psi_r$ are linearly independent, and $\rank\rho=r$.

The Bob effective support is all of $B$, because
\begin{equation}
\begin{split}
    \Span\{(x\otimes I)\psi_j:x\in A^*,\ 1\le j\le r\}
    &=\Span\{F(x)e_j:x\in A^*,\ 1\le j\le r\}  \\
    &=\Span\{\Ran W:W\in L\}
    =B.
\end{split}
\end{equation}
The Alice effective support is all of $A$.  Indeed, for every nonzero $x\in A^*$ one has $F(x)\ne0$, and the corresponding conditional state satisfies
\begin{equation}
    \sigma_x=F(x)F(x)^\dagger,
    \qquad
    \Tr\sigma_x=\|F(x)\|_{\mathrm{HS}}^2>0.
\end{equation}
Thus no nonzero Alice direction is removed by passing to effective support.

Finally, let $\alpha$ be any nonzero Alice outcome and put $x=\bra{\alpha}$.  Then
\begin{equation}
    \sigma_\alpha=F(x)F(x)^\dagger.
\end{equation}
Since $0\ne F(x)\in L$ and $L$ contains no nonzero rank-one matrix, $\rank F(x)\ne1$.  Hence
\begin{equation}
    \rank\sigma_\alpha=\rank F(x)\ne1.
\end{equation}
Therefore $\rho$ has no $A\to B$ pure contact.  Normalizing $\rho$ by its trace does not change its rank, effective supports, or conditional-state ranks, so the constructed positive operator gives the claimed rank-$r$ state.  This proves that the inequality in Lemma~\ref{lem:rank_forced_SM} is sharp as a universal pure-contact criterion whenever full-effective states of exact rank $r$ exist. \hfill$\square$

\section{S3. The two-dimensional boundary-scaling obstruction}
\label{sec:scaling}

The steering obstruction used in the Letter is local near a pure trusted boundary point.  We use the two-dimensional matrix form needed after the trusted compression in Sec.~S4, which makes the support--kernel block explicit.

Let $\sigma_t$, $0<t<t_0$, be unnormalized conditional states on a two-dimensional trusted space with ordered basis $\{e_0,e_1\}$.  Assume that $\sigma_t$ approaches the pure boundary ray $\mathbb C e_0$ and write
\begin{equation}
    \sigma_t=
    \begin{pmatrix}
        a_t & b_t\\
        \bar b_t & d_t
    \end{pmatrix},
    \qquad a_t,d_t\ge0.
\end{equation}
Suppose that, for all sufficiently small $t>0$,
\begin{equation}
    |b_t|\ge \ell t,
    \qquad
    d_t\le c t^2,
    \label{eq:SM_scaling_assumption}
\end{equation}
with constants $\ell>0$ and $c<\infty$.  The entry $b_t$ is the first-order support--kernel coherence, while $d_t$ is the population in the kernel direction.  The obstruction is that an LHS model cannot create a linear off-diagonal term while keeping the kernel population quadratic.

\begin{lemma}[Boundary-scaling obstruction]
\label{lem:SM_cap_scaling}
No LHS representation can reproduce a family satisfying Eq.~\eqref{eq:SM_scaling_assumption}.  Consequently any projective assemblage containing such a one-parameter subfamily is not LHS.
\end{lemma}

\emph{Proof.}  Assume, toward a contradiction, that there is an LHS representation of this family.  For the present one-parameter family of projective outcomes, an LHS model has the form
\begin{equation}
    \sigma_t=\int g_t(\lambda)\,\omega_\lambda\,d\mu(\lambda),
    \qquad 0\le g_t(\lambda)\le1,
    \label{eq:SM_lhs_lambda_space}
\end{equation}
with a finite hidden-state measure $\mu$ and normalized trusted states $\omega_\lambda$ on $\Span\{e_0,e_1\}$.  Push $\mu$ forward under $\lambda\mapsto \omega_\lambda$ to a finite measure $\nu$ on the compact set of normalized density matrices on $\Span\{e_0,e_1\}$.  For each $t$, the response-weighted measure $g_t(\lambda)\,d\mu(\lambda)$ pushes forward to a measure dominated by $\nu$; by the Radon--Nikodym theorem there exists a measurable density $f_t(\tau)$, with $0\le f_t(\tau)\le1$ $\nu$-almost everywhere, such that
\begin{equation}
    \sigma_t=\int f_t(\tau)\,\tau\,d\nu(\tau),
    \qquad 0\le f_t(\tau)\le1,
    \label{eq:SM_lhs_density_space}
\end{equation}
where the dependence on $t$ is carried by the response density $f_t$.  In the same ordered basis write
\begin{equation}
    \tau=
    \begin{pmatrix}
        a(\tau)&b(\tau)\\
        \overline{b(\tau)}&d(\tau)
    \end{pmatrix},
    \qquad \tau\ge0,
    \qquad \Tr\tau=1.
\end{equation}
Positivity gives
\begin{equation}
    |b(\tau)|^2\le a(\tau)d(\tau)\le d(\tau).
    \label{eq:SM_entry_bound}
\end{equation}
Moreover,
\begin{equation}
    b_t=\int f_t(\tau)b(\tau)\,d\nu(\tau),
    \qquad
    d_t=\int f_t(\tau)d(\tau)\,d\nu(\tau)
    \le c t^2.
    \label{eq:SM_entry_integrals}
\end{equation}

Choose $K>2c/\ell$ and define the shrinking cap around the boundary point $N=P_{e_0}$ by
\begin{equation}
    \Gamma_t=\{\tau: d(\tau)\le K^2t^2\}.
\end{equation}
On the complement $\Gamma_t^c$, Eq.~\eqref{eq:SM_entry_bound} and $d(\tau)>K^2t^2$ imply
\begin{equation}
    |b(\tau)|\le \frac{d(\tau)}{Kt}.
\end{equation}
Therefore
\begin{equation}
\begin{split}
    \left|\int_{\Gamma_t^c} f_t(\tau)b(\tau)\,d\nu(\tau)\right|
    &\le \frac{1}{Kt}\int_{\Gamma_t^c} f_t(\tau)d(\tau)\,d\nu(\tau)  \\
    &\le \frac{c}{K}t < \frac{\ell}{2}t .
\end{split}
\end{equation}
Since $|b_t|\ge \ell t$, the cap contribution must satisfy
\begin{equation}
    \left|\int_{\Gamma_t} f_t(\tau)b(\tau)\,d\nu(\tau)\right|
    \ge \frac{\ell}{2}t.
    \label{eq:SM_cap_contribution}
\end{equation}
Inside $\Gamma_t$, Eq.~\eqref{eq:SM_entry_bound} gives $|b(\tau)|\le Kt$.  At the boundary point $N=P_{e_0}$ one has $d(N)=0$ and $b(N)=0$, so an atom at $N$ cannot contribute to the off-diagonal integral.  Hence Eq.~\eqref{eq:SM_cap_contribution} implies
\begin{equation}
    \nu(\Gamma_t\setminus\{N\})\ge \frac{\ell}{2K}
    \label{eq:SM_positive_cap_mass}
\end{equation}
for all sufficiently small $t>0$.

This is impossible: along any sequence $t_j\downarrow0$, the sets $\Gamma_{t_j}\setminus\{N\}$ decrease to the empty set: if $d(\tau)=0$, positivity and $\Tr\tau=1$ force $\tau=N$.  By continuity from above for finite measures, $\nu(\Gamma_{t_j}\setminus\{N\})\to0$, contradicting Eq.~\eqref{eq:SM_positive_cap_mass}. \hfill$\square$

\paragraph*{Interpretation.}  The obstruction concerns the joint scaling of the one-parameter family, rather than the decomposition of any single conditional state.

\section{S4. Support--kernel boundary certificate}
\label{sec:sk}

The operator criterion below is the two-dimensional support--kernel form of the boundary theorem developed in Ref.~\cite{ZhangChenBoundary2026}.  Let $X$ be the untrusted measuring side and $Y$ the trusted side.  Choose orthonormal vectors $\ket{\alpha_0},\ket{\alpha_1}\in\mathcal H_X$ and write the corresponding $2\times2$ block of $\rho$ as
\begin{equation}
    \rho_{01}=\begin{pmatrix}A&B\\B^\dagger&D\end{pmatrix}_X,
    \quad
    A=\bra{\alpha_0}\rho\ket{\alpha_0},
    \quad
    B=(\bra{\alpha_0}\otimes I)\rho(\ket{\alpha_1}\otimes I),
    \quad
    D=\bra{\alpha_1}\rho\ket{\alpha_1},
\end{equation}
where $A,B,D$ act on $\mathcal H_Y$.  The projective outcome
\begin{equation}
    \ket{\xi_t}=\frac{\ket{\alpha_0}+t\ket{\alpha_1}}{\sqrt{1+t^2}}
\end{equation}
prepares the unnormalized trusted state
\begin{equation}
    \sigma_t=\frac{A+t(B+B^\dagger)+t^2D}{1+t^2}.
    \label{eq:SM_trusted_path}
\end{equation}

\begin{proposition}[Support--kernel boundary certificate]
\label{prop:SM_sk_certificate}
Assume that $\rho$ is a state, $A\ne0$, $\Ker A\ne\{0\}$, and
\begin{equation}
    P_{\supp A}BP_{\Ker A}\ne0.
    \label{eq:SM_sk_condition}
\end{equation}
Then $\rho^{\Gamma_Y}\not\ge0$, and the projective assemblage generated by complete rank-one projective measurements on $X$ is not LHS.  Hence $\rho$ is projectively steerable from $X$ to $Y$.
\end{proposition}

\emph{Proof.}  Choose unit vectors $\phi\in\supp A$ and $\beta\in\Ker A$ such that
\begin{equation}
    v=\langle\phi|B|\beta\rangle\ne0.
\end{equation}
Since $\beta\in\Ker A$,
\begin{equation}
    \langle\alpha_0,\beta|\rho|\alpha_0,\beta\rangle
    =\langle\beta|A|\beta\rangle=0.
\end{equation}
Positivity of $\rho$ therefore gives $\rho\ket{\alpha_0,\beta}=0$, and in particular $B^\dagger\beta=0$.

Let $\mathcal T=\Span\{\phi,\beta\}$ and let $P_{\mathcal T}$ be the orthogonal projection onto $\mathcal T$.  Compress Eq.~\eqref{eq:SM_trusted_path} to $\mathcal T$.  In the ordered basis $\{\phi,\beta\}$, the compressed conditional state $\widetilde\sigma_t=P_{\mathcal T}\sigma_tP_{\mathcal T}$ has matrix entries
\begin{equation}
    \widetilde\sigma_t=
    \begin{pmatrix}
        a+O(t) & tv+O(t^2)\\
        t\bar v+O(t^2) & O(t^2)
    \end{pmatrix},
    \label{eq:SM_compressed_scaling}
\end{equation}
where $a=\langle\phi|A|\phi\rangle>0$ and $d=\langle\beta|D|\beta\rangle\ge0$.  The relation $B^\dagger\beta=0$ removes the first-order lower-right term.  Thus, for all sufficiently small $t>0$, the off-diagonal entry is bounded below linearly in $t$, while the $\beta$-diagonal entry is bounded above quadratically in $t$.  Lemma~\ref{lem:SM_cap_scaling} applies to the compressed family.

We next show that this compressed obstruction is inherited by the original assemblage.  Suppose that the original $X\to Y$ projective assemblage had an LHS model, so that
\begin{equation}
    \sigma_t=\int f_t(\lambda)\tau_\lambda\,d\mu(\lambda),
    \qquad 0\le f_t(\lambda)\le1,
    \label{eq:SM_original_lhs}
\end{equation}
with $\mu$ finite and each $\tau_\lambda$ a normalized state on $\mathcal H_Y$.  Compressing gives
\begin{equation}
    \widetilde\sigma_t=\int f_t(\lambda)P_{\mathcal T}\tau_\lambda P_{\mathcal T}\,d\mu(\lambda).
\end{equation}
For those $\lambda$ with $q_\lambda=\Tr(P_{\mathcal T}\tau_\lambda P_{\mathcal T})>0$, define
\begin{equation}
    \widetilde\tau_\lambda=\frac{P_{\mathcal T}\tau_\lambda P_{\mathcal T}}{q_\lambda},
    \qquad
    d\widetilde\mu(\lambda)=q_\lambda\,d\mu(\lambda).
\end{equation}
Since $0\le q_\lambda\le1$, the measure $\widetilde\mu$ is finite.  Points with $q_\lambda=0$ contribute zero to every compressed matrix entry and may be discarded.  Hence
\begin{equation}
    \widetilde\sigma_t=\int f_t(\lambda)\widetilde\tau_\lambda\,d\widetilde\mu(\lambda)
\end{equation}
is an LHS representation on the two-dimensional trusted space $\mathcal T$.  This contradicts Lemma~\ref{lem:SM_cap_scaling}.  Therefore the original projective assemblage is not LHS.

For the NPT statement, take the partial transpose on $Y$ in a basis containing $\phi$ and $\beta$, and restrict $\rho^{\Gamma_Y}$ to the subspace
\begin{equation}
    \Span\{\ket{\alpha_0,\beta},\ket{\alpha_1,\phi}\}.
\end{equation}
The first diagonal entry is $\langle\beta|A|\beta\rangle=0$, and the off-diagonal entry is
\begin{equation}
    \langle\alpha_0,\beta|\rho^{\Gamma_Y}|\alpha_1,\phi\rangle
    =\langle\alpha_0,\phi|\rho|\alpha_1,\beta\rangle
    =v.
\end{equation}
Thus the principal block has the form
\begin{equation}
    \begin{pmatrix}0&v\\\bar v&*\end{pmatrix}
\end{equation}
with determinant $-|v|^2<0$.  Hence $\rho^{\Gamma_Y}\not\ge0$. \hfill$\square$

\begin{corollary}[Pure-contact form]
\label{cor:SM_pure_contact}
If $A=aP_{\phi}$ with $a>0$ and $P_{\phi}BP_{\phi^\perp}\ne0$, then the state is NPT and projectively steerable from $X$ to $Y$.
\end{corollary}

\section{S5. Degenerate pure contacts and Schur complements}
\label{sec:SM_peeling}

A degenerate pure contact admits a Schur-complement peel.  In the rank-two case, the residual has rank one.

Fix the direction $A\to B$ and suppose that a projective outcome $\alpha_0$ prepares a pure Bob state.  Choose $\alpha_0$ and $\phi$ unit, so that
\begin{equation}
    A:=\bra{\alpha_0}\rho\ket{\alpha_0}=aP_{\phi},
    \qquad a>0.
\end{equation}
Write $\mathcal H_A=\mathbb C\alpha_0\oplus\mathcal K_A$ with $\mathcal K_A=\alpha_0^\perp$, and, for $\eta\in\mathcal K_A$, set
\begin{equation}
    B_\eta=(\bra{\alpha_0}\otimes I)\rho(\ket{\eta}\otimes I).
\end{equation}
The contact is called degenerate if
\begin{equation}
    P_{\phi} B_\eta P_{\phi^\perp}=0
    \qquad\text{for all }\eta\in\mathcal K_A.
    \label{eq:SM_degenerate_def}
\end{equation}

\begin{lemma}[Shape of degenerate tangent blocks]
\label{lem:SM_degenerate_shape}
If the contact is degenerate, then for every $\eta\in\mathcal K_A$ there is a scalar $b(\eta)$ such that
\begin{equation}
    B_\eta=b(\eta)P_{\phi}.
    \label{eq:SM_scalar_block}
\end{equation}
\end{lemma}

\emph{Proof.}  If $\beta\perp\phi$, then
\begin{equation}
    \langle\alpha_0,\beta|\rho|\alpha_0,\beta\rangle=0.
\end{equation}
Since $\rho\ge0$, we have $\rho\ket{\alpha_0,\beta}=0$.  Hence $B_\eta^\dagger\beta=0$ for every $\beta\perp\phi$, and so $\Ran B_\eta\subseteq\mathbb C\phi$.  Thus $B_\eta=P_{\phi} B_\eta$.  The degeneracy condition gives $P_{\phi} B_\eta P_{\phi^\perp}=0$, hence $B_\eta=B_\eta P_{\phi}$.  Therefore $B_\eta=P_{\phi} B_\eta P_{\phi}$, which proves Eq.~\eqref{eq:SM_scalar_block}. \hfill$\square$

Since $B_\eta$ depends linearly on $\eta$, so does $b(\eta)$.  Hence there exists a vector $\gamma\in\mathcal K_A$ such that
\begin{equation}
    b(\eta)=a\langle\gamma|\eta\rangle.
    \label{eq:SM_gamma_def}
\end{equation}
Let
\begin{equation}
    \chi=\alpha_0+\gamma.
\end{equation}

\begin{lemma}[Schur-complement peel]
\label{lem:SM_peel}
Under the assumptions above,
\begin{equation}
    \rho=a\ketbra{\chi}\otimes P_{\phi}+\rho^{(1)},
    \qquad \rho^{(1)}\ge0,
    \label{eq:SM_peel_decomp}
\end{equation}
where $\rho^{(1)}$ is supported on $\mathcal K_A\otimes\mathcal H_B$.  Moreover
\begin{equation}
    \rank\rho=1+\rank\rho^{(1)}.
    \label{eq:SM_rank_add}
\end{equation}
\end{lemma}

\emph{Proof.}  Decompose the Hilbert space as
\begin{equation}
    \mathbb C(\alpha_0\otimes\phi)
    \oplus(\mathcal K_A\otimes\mathbb C\phi)
    \oplus(\mathcal K_A\otimes\phi^\perp)
    \oplus(\alpha_0\otimes\phi^\perp).
\end{equation}
The last summand is contained in $\Ker\rho$, so $\rho$ has no nonzero matrix elements involving it.  By Lemma~\ref{lem:SM_degenerate_shape}, the off-diagonal row from the one-dimensional contact corner into $\mathcal K_A\otimes\mathcal H_B$ is represented by the vector $a\gamma\otimes\phi$.  Thus, with respect to
\begin{equation}
    \mathbb C(\alpha_0\otimes\phi)
    \oplus(\mathcal K_A\otimes\mathcal H_B),
\end{equation}
we may write
\begin{equation}
    \rho=
    \begin{pmatrix}
        a & a\langle\gamma\otimes\phi| \\
        a|\gamma\otimes\phi\rangle & R
    \end{pmatrix}.
\end{equation}
The Schur complement of the positive scalar corner $a$ is positive:
\begin{equation}
    R-a\ketbra{\gamma\otimes\phi}\ge0.
\end{equation}
Therefore
\begin{equation}
    \rho
    =a\ketbra{\alpha_0\otimes\phi+\gamma\otimes\phi}
      +\left(R-a\ketbra{\gamma\otimes\phi}\right),
\end{equation}
which is Eq.~\eqref{eq:SM_peel_decomp}.  The peeled vector has a nonzero component in the contact corner, while the residual is supported in the complementary subspace $\mathcal K_A\otimes\mathcal H_B$.  Their ranges therefore intersect trivially.  Since for positive semidefinite operators with trivial range intersection one has $\Ran(X+Y)=\Ran X+\Ran Y$, the ranks add, giving Eq.~\eqref{eq:SM_rank_add}. \hfill$\square$

\begin{lemma}[Effective support after a peel]
\label{lem:SM_support_update}
Assume that $\rho$ is effective on Alice, so $\supp\rho_A=\mathbb C\alpha_0\oplus\mathcal K_A$ with $\mathcal K_A=\alpha_0^\perp$.  In the decomposition \eqref{eq:SM_peel_decomp}, if $\rho^{(1)}_A=\Tr_B\rho^{(1)}$, then
\begin{equation}
    \supp\rho_A^{(1)}=\mathcal K_A.
    \label{eq:SM_A_support_update}
\end{equation}
Thus the effective Alice dimension drops by exactly one.
\end{lemma}

\emph{Proof.}  Taking the partial trace over Bob in Eq.~\eqref{eq:SM_peel_decomp} gives
\begin{equation}
    \rho_A=a\ketbra{\chi}+\rho_A^{(1)},
    \qquad \supp\rho_A^{(1)}\subseteq\mathcal K_A.
\end{equation}
Because $\chi=\alpha_0+\gamma$ has a nonzero $\alpha_0$ component, $\chi\notin\mathcal K_A$.  If $\supp\rho_A^{(1)}$ were a proper subspace of $\mathcal K_A$, then the sum of the line $\mathbb C\chi$ and $\supp\rho_A^{(1)}$ would have dimension at most $m-1$, contradicting $\supp\rho_A=\mathbb C\alpha_0\oplus\mathcal K_A$.  Hence Eq.~\eqref{eq:SM_A_support_update} holds. \hfill$\square$

\begin{lemma}[Entanglement passes to the residual]
\label{lem:SM_ent_passes}
If $\rho$ is entangled and Eq.~\eqref{eq:SM_peel_decomp} holds, then $\rho^{(1)}$ is entangled on its effective local supports.
\end{lemma}

\emph{Proof.}  If $\rho^{(1)}$ were separable, then Eq.~\eqref{eq:SM_peel_decomp} would express $\rho$ as a sum of separable positive operators: one product layer plus a separable residual.  Hence $\rho$ would be separable, a contradiction. \hfill$\square$

\begin{lemma}[Lifting active contacts]
\label{lem:SM_lifting}
If $\rho^{(1)}$ has an active pure contact in the direction $A\to B$, then the original state $\rho$ has an active pure contact in the same direction.
\end{lemma}

\emph{Proof.}  Let $x_0\in\mathcal K_A$ be a unit residual contact vector and let $x_1\in\mathcal K_A$, $x_1\perp x_0$, be a unit residual tangent vector.  For an operator $\omega$, write
\begin{equation}
    C_\omega(x,y)=(\bra{x}\otimes I)\omega(\ket{y}\otimes I).
\end{equation}
Define the linear map
\begin{equation}
    J:\mathcal K_A\to\mathbb C\alpha_0\oplus\mathcal K_A,
    \qquad
    Jx=x-\langle\gamma|x\rangle\alpha_0.
\end{equation}
Since $P_{\mathcal K_A}J=I_{\mathcal K_A}$, the map $J$ is injective.  Moreover, $Jx\perp\chi=\alpha_0+\gamma$.  The peeled product layer in Eq.~\eqref{eq:SM_peel_decomp} therefore contributes nothing to blocks between lifted vectors, and
\begin{equation}
    C_\rho(Jx,Jy)=C_{\rho^{(1)}}(x,y)
    \qquad (x,y\in\mathcal K_A).
    \label{eq:SM_lift_block_identity}
\end{equation}
Let
\begin{equation}
    A=C_{\rho^{(1)}}(x_0,x_0)=qP_{\varphi},
    \qquad
    B=C_{\rho^{(1)}}(x_0,x_1),
\end{equation}
where $q>0$ and $P_{\varphi}BP_{\varphi^\perp}\ne0$.  Set $u_0=Jx_0$, $u_1=Jx_1$, and
\begin{equation}
    \lambda=\frac{\langle u_0|u_1\rangle}{\|u_0\|^2},
    \qquad
    w_1=u_1-\lambda u_0.
\end{equation}
The injectivity of $J$ implies $w_1\ne0$.  Thus
\begin{equation}
    \widehat u_0=\frac{u_0}{\|u_0\|},
    \qquad
    \widehat u_1=\frac{w_1}{\|w_1\|}
\end{equation}
are a unit contact vector and an orthogonal unit tangent vector.  By Eq.~\eqref{eq:SM_lift_block_identity},
\begin{align}
    P_{\varphi} C_\rho(\widehat u_0,\widehat u_1)P_{\varphi^\perp}
    &=\frac{P_{\varphi}(B-\lambda A)P_{\varphi^\perp}}
            {\|u_0\|\,\|w_1\|} \\
    &=\frac{P_{\varphi}BP_{\varphi^\perp}}
            {\|u_0\|\,\|w_1\|}\ne0,
\end{align}
because $P_{\varphi} A P_{\varphi^\perp}=0$.  Hence normalization and tangent orthogonalization preserve the nonzero support--kernel component, and the lifted contact is active. \hfill$\square$

This lifting statement can be iterated.  The corresponding linear injections compose.  At each lift the unnormalized blocks are preserved, and the normalization and tangent-orthogonalization argument above preserves the nonzero support--kernel component.  An active contact of the final residual therefore lifts to an active contact of the original state.

\begin{lemma}[Pure entangled residuals]
\label{lem:SM_pure_terminal}
Every pure entangled residual has an active pure contact in both steering directions on its effective Schmidt supports.
\end{lemma}

\emph{Proof.}  Write a Schmidt decomposition
\begin{equation}
    \ket{\psi}=\sum_{j=1}^s s_j\ket{j_A}\ket{j_B},
    \qquad s\ge2,\quad s_j>0.
\end{equation}
For $A\to B$, the outcome $j=1$ prepares the pure Bob state $s_1^2P_{1_B}$.  Moving Alice in the tangent direction $\ket{2_A}$ produces the first-order block $s_1s_2|1_B\rangle\langle2_B|$, which connects the support line to the kernel.  The reverse steering direction is obtained by exchanging the parties. \hfill$\square$

\begin{corollary}[Rank-two degenerate branch]
\label{cor:SM_rank_two_branch}
Let $\rho$ have rank two and a degenerate pure contact in the direction $A\to B$.  Then, for unit representatives $u\in\mathbb C\alpha_0\oplus\mathcal K_A$ and $\zeta\in\mathcal K_A\otimes\mathcal H_B$,
\begin{equation}
    \rho=pP_{u\otimes\phi}+(1-p)P_{\zeta},
    \qquad 0<p<1,
    \label{eq:SM_rank_two_branch_decomp}
\end{equation}
where the Alice component of $u$ along $\alpha_0$ is nonzero.  If $\rho$ is entangled, then $\zeta$ is entangled and the same steering direction contains an active pure contact.
\end{corollary}

\emph{Proof.}  By Lemma~\ref{lem:SM_peel}, the degenerate contact gives
\begin{equation}
    \rho=a\ketbra{\chi}\otimes P_{\phi}+\rho^{(1)},
    \qquad \rho^{(1)}\ge0,
    \qquad \rank\rho^{(1)}=1.
\end{equation}
Normalizing the two positive summands gives Eq.~\eqref{eq:SM_rank_two_branch_decomp}, with $u$ the unit representative of the line $\mathbb C\chi$.  If $\zeta$ were product, Eq.~\eqref{eq:SM_rank_two_branch_decomp} would be a separable decomposition of $\rho$, so entanglement of $\rho$ forces $\zeta$ to be a pure entangled vector.

To identify the resulting active contact, write $u=c\alpha_0+u_\perp$ with $u_\perp\in\mathcal K_A$.  The Alice Schmidt support of the entangled vector $\zeta$ inside $\mathcal K_A$ has dimension at least two, hence is not contained in the subspace $\mathbb C u_\perp$; if $u_\perp=0$, this subspace is understood as $\{0\}$.  Choose a unit vector $\delta\in\mathcal K_A$ such that
\begin{equation}
    \langle\delta|u_\perp\rangle=0,
    \qquad
    b_0:=(\bra{\delta}\otimes I)\zeta\ne0.
\end{equation}
The outcome $\delta$ removes the product contribution at zeroth order in Eq.~\eqref{eq:SM_rank_two_branch_decomp} and prepares the pure Bob state $(1-p)\ketbra{b_0}$.  Since $\zeta$ has Schmidt rank at least two, there is a unit vector $\epsilon\in\mathcal K_A$, with $\epsilon\perp\delta$, such that
\begin{equation}
    b_1:=(\bra{\epsilon}\otimes I)\zeta
\end{equation}
is not collinear with $b_0$.  Along the projective path $(\delta+t\epsilon)/\sqrt{1+t^2}$ the conditional state is
\begin{equation}
    \frac{(1-p)\ketbra{b_0+t b_1}
          +p|\langle\delta+t\epsilon|u\rangle|^2P_{\phi}}{1+t^2}.
\end{equation}
The choice $\langle\delta|u_\perp\rangle=0$ makes the product contribution $O(t^2)$, while the component of $b_1$ orthogonal to $b_0$ produces a first-order support--kernel block.  Hence the new contact is active. \hfill$\square$

Thus the rank-two degenerate case is contained in the general Schur-complement recursion.

\section{S6. Finite rank-forcing recursion}
\label{sec:SM_recursion}

The preceding lemmas yield the directional low-rank theorem used in the Letter.

\begin{theorem}[Directional rank-forced projective steering]
\label{thm:SM_rank_forced_steering}
Let $\rho$ be an entangled state on effective local supports $\mathbb C^m\otimes\mathbb C^n$, $m,n\ge2$, and let $r=\rank\rho$.  If
\begin{equation}
    r\le1+\left\lfloor\frac{m-1}{n-1}\right\rfloor,
    \label{eq:SM_directional_rank_window}
\end{equation}
then $\rho$ is projectively steerable from $A$ to $B$.  If
\begin{equation}
    r\le1+\left\lfloor\frac{n-1}{m-1}\right\rfloor,
\end{equation}
then $\rho$ is projectively steerable from $B$ to $A$.
\end{theorem}

\emph{Proof.}  We prove the first statement.  The second is obtained by exchanging the parties.  If $r=1$, the state is pure entangled and Lemma~\ref{lem:SM_pure_terminal} gives an active contact; Proposition~\ref{prop:SM_sk_certificate} then gives projective steering.

Assume $r>1$.  By Lemma~\ref{lem:rank_forced_SM}, Eq.~\eqref{eq:SM_directional_rank_window} forces a pure contact.  If this contact is active, Proposition~\ref{prop:SM_sk_certificate} gives projective steering.  If it is degenerate, Lemmas~\ref{lem:SM_peel}, \ref{lem:SM_support_update}, and \ref{lem:SM_ent_passes} produce an entangled residual $\rho^{(1)}$ with
\begin{equation}
    r_1=r-1,
    \qquad m_1=m-1,
    \qquad n_1\le n.
\end{equation}
If the residual is not pure, its entanglement gives $m_1,n_1\ge2$.  The same rank-forcing inequality applies to the residual, since
\begin{align}
    m_1-1-(n_1-1)(r_1-1)
    &=(m-2)-(n_1-1)(r-2) \\
    &=[m-1-(n-1)(r-1)]
       +(n-2)+(n-n_1)(r-2)
       \ge0.
\end{align}
Hence, unless the residual is already pure, the residual again has a pure contact.  Repeating the active/degenerate dichotomy gives a finite process: each degenerate step lowers the rank by one, so after finitely many steps one reaches either an active contact or a pure entangled residual.  In the pure case, Lemma~\ref{lem:SM_pure_terminal} supplies an active contact.  Iterating Lemma~\ref{lem:SM_lifting} through all previous peels preserves the unnormalized blocks and the nonzero support--kernel component after normalization and tangent orthogonalization.  Proposition~\ref{prop:SM_sk_certificate} then proves steering of the original state. \hfill$\square$

\begin{corollary}[Rank-Two Projective Steering]
\label{cor:SM_rank_two}
Let $\rho$ be a rank-two entangled state on effective local supports $\mathbb C^m\otimes\mathbb C^n$.  If $m\ge n$, then $\rho$ is projectively steerable from $A$ to $B$.  If $n\ge m$, then $\rho$ is projectively steerable from $B$ to $A$.  In particular, if $m=n$, the state is two-way projectively steerable.
\end{corollary}

\emph{Proof.}  For $r=2$, Eq.~\eqref{eq:SM_directional_rank_window} is equivalent to $m\ge n$.  The exchanged condition is equivalent to $n\ge m$. \hfill$\square$

\section{S7. Consequences of the rank-forced window}
\label{sec:SM_scope}

For the fixed steering direction $A\to B$, the contraction space has dimension $m$, while the trusted rank-one locus sits inside matrices with $n$ rows.  This gives the condition
\begin{equation}
    r\le 1+\left\lfloor\frac{m-1}{n-1}\right\rfloor.
\end{equation}
After exchanging the parties, the corresponding condition is obtained by interchanging $m$ and $n$.  Thus, for a $d\otimes2$ state, measurements on the $d$-dimensional side force projective steering up to rank $r=d$.  The exchanged bound leaves no nontrivial mixed-rank window when $d>2$, while for $d=2$ the two directions coincide.

The rank-two case follows directly from the two inequalities.  If $m\ge n$, Theorem~\ref{thm:SM_rank_forced_steering} gives steering from $A$ to $B$; if $n\ge m$, the exchanged statement gives steering from $B$ to $A$.  Since one of these two orderings always holds, every rank-two entangled state is projectively steerable in the direction from the larger effective local dimension to the smaller one.  When $m=n$, both orderings hold and the conclusion is two-way projective steering.

The same rank window is compatible with the usual low-rank PPT comparison, but the two arguments play different roles.  The low-rank PPT separability theorem~\cite{HorodeckiLowRank2000} says that a PPT state of rank at most $\max\{m,n\}$ is separable.  In the directional window,
\begin{equation}
    r\le 1+\left\lfloor\frac{m-1}{n-1}\right\rfloor \le m\le\max\{m,n\},
\end{equation}
so an entangled state in this window must be NPT.  The support--kernel boundary certificate, together with the Schur-complement recursion for a degenerate contact, supplies the steering conclusion.  The negative partial-transpose minor is a parallel consequence of the same boundary block.

Beyond the forced rank window, the support--kernel certificate remains a direct test whenever a pure contact and a nonzero support--kernel tangent block are available by other means.  The test requires only the contact vector, a tangent direction on the untrusted side, and a kernel vector on the trusted side.  These data are often more accessible than an optimized steering inequality in exact algebraic families, low-rank tomography, variational ansatz states, tensor-network truncations, dissipative dark-state manifolds, and bipartite cuts of multipartite systems.  The rank-forcing lemma identifies when rank and dimension force a boundary contact, while the support--kernel certificate shows how the same block produces a negative partial-transpose minor and certifies projective steering.